\newcommand{\blind}{0}
\newcommand\uu{\mathbf{u}}
\newcommand\V{\mathbf{V}}
\newcommand\y{\mathbf{y}}
\newcommand\Z{\mathbf{Z}}
\newcommand\PP{\mathbf{P}}
\newcommand\X{\mathbf{X}}
\newcommand\Y{\mathbf{Y}}
\newcommand\PPi{\boldsymbol{\Pi}}
\newcommand\bbeta{\boldsymbol{\beta}}
\newtheorem{theorem}{Theorem}
\newtheorem{lemma}{Lemma}
\newtheorem{corollary}{Corollary}
\theoremstyle{definition}
\newtheorem{remark}{Remark}
\newtheorem{assumption}{Assumption}
\begin{document}

\def\spacingset#1{\renewcommand{\baselinestretch}%
{#1}\small\normalsize} \spacingset{1}

%%%%%%%%%%%%%%%%%%%%%%%%%%%%%%%%%%%%%%%%%%%%%%%%%%%%%%%%%%%%%%%%%%%%%%%%%%%%%%

\if0\blind
{
  \title{\bf A specification test for the strength of instrumental variables}
  \author{Zhenhong Huang\footnote{Department of Statistics and Actuarial Science, The University of Hong Kong. Email: zhhuang7@connect.hku.hk}, \;
   Chen Wang\footnote{Department of Statistics and Actuarial Science, The University of Hong Kong. Email: stacw@hku.hk} \;and\; Jianfeng Yao\footnote{School of Data Science, The Chinese University of Hong Kong (Shenzhen). Email:jeffyao@cuhk.edu.cn}\hspace{.2cm} \\
    }
    \date{}
  \maketitle
} \fi

\if1\blind
{
  \bigskip
  \bigskip
  \bigskip
  \begin{center}
    {\LARGE\bf Title}
\end{center}
  \medskip
} \fi

\bigskip
\begin{abstract}
\noindent This paper develops a new specification test for the instrument weakness when the number of instruments $K_n$ is large  with a magnitude comparable to the  sample size $n$. The test relies on the fact that the difference between the two-stage least squares (2SLS) estimator   and the ordinary least squares (OLS) estimator asymptotically disappears when there are  many weak instruments, but otherwise converges to a non-zero limit. We establish the limiting distribution of the difference within the above two specifications, and introduce a delete-$d$ Jackknife procedure to consistently estimate the asymptotic variance/covariance of the difference. Monte Carlo experiments demonstrate the good performance of the test procedure for both cases of single and multiple endogenous variables. Additionally, we re-examine the analysis of returns to education data in \cite{angrist1991does}  using our proposed test. Both the simulation results and empirical analysis indicate the reliability of the test.  
\end{abstract}

\noindent%
{\it Keywords:}  weak instruments, many instruments, two stage least squares estimator, specification test
\vfill

\newpage
\spacingset{1.8} % DON'T change the spacing!
\section{Introduction}
\label{sec:intro}
 In some instrumental variable (IV) models, empirical researchers often face situations where the number of instruments is large with a magnitude comparable to the sample size \citep{arellano1991some, dagenais1997higher, bhuller2020incarceration}. Instead of gaining efficiency as the conventional asymptotic theory indicates, the increasing number of instruments would deteriorate various IV estimators and test statistics \citep{bekker1994alternative, han2006gmm, anatolyev2011specification}. This is the first strand of theoretical econometric literature on the IV estimator, the so-called ``many instruments problem". 

The second strand focuses on the statistical properties of the related estimation and inference methods, the so-called ``weak instrument problem", where the instruments are weakly correlated with the endogenous variables. It is well understood that weakness in instruments will lead to bias and inconsistency of the IV estimators and test statistics \citep{bound1995problems,staiger1994instrumental}. The key quantity here is the concentration parameter\footnote{When there are multiple endogenous variables, the key quantity is the concentration matrix, whose eigenvalues contain the strength of instruments. For simplicity of illustration, we use the concentration parameter as a unified term. }   \citep{rothenberg1984approximating}, which depicts the strength of the instrument set. The conventional asymptotic setting assumes that the concentration parameter grows at the same rate as the sample size. However, in the weak instrument asymptotics, this parameter is typically of smaller order than the sample size. For example, \cite{staiger1994instrumental} proposed the local-to-zero framework to keep the concentration parameter roughly constant against the sample size and show that the classical IV estimators are not consistent. More generally, \cite{chao2005consistent} introduced a unified framework to amalgamate the aforementioned two strands by modeling the concentration parameter with an arbitrary and divergent sequence. They show that the consistent estimation is still feasible when the concentration parameter is of higher order than $\sqrt{K_n}$. From the perspective of inference, \cite{mikusheva2021inference} established that no asymptotically consistent test for the interested coefficients exists if the ratio of the concentration parameter over $\sqrt{K_n}$ stays bounded. Therefore, in applications involving many instruments, it raises a natural question of how to determine which range the concentration parameter is in.

Compared with developing estimation and inference methods that are robust to the many weak instruments problem, the literature on assessing the strength of instruments is exceptionally scarce. When the number of instruments is fixed, \cite{stock2002testing} proposed to use a first-stage $F$ test for the single endogenous variable and the minimum eigenvalue of the Cragg-Donald statistics for multiple endogenous variables, which depict the magnitude of the concentration parameter. They have tabulated the non-standard critical values by simulation experiments, and a ``rule of thumb" becomes a commonplace pre-test: reject the null of weak instruments when this $F$ statistic exceeds 10.  However, the validity of the test remains unclear when the number of instruments is large, even though the authors have shown its reliability when the size of the instrument set grows at a much slower rate than the sample size ($K_n^4/n\rightarrow0$). Also, for the case of multiple endogenous variables, the tabulated critical values are known to be conservative as pointed out in \cite{stock2002testing}. Building on their work, \cite{sanderson2016weak} considered tests for the purposes of estimation and inference on one of multiple endogenous variables, but still keeping the number of instruments fixed. For the case of many instruments,  \cite{hahn2002new} proposed a test to examine the adequacy of the standard asymptotic result in IV regression models. They argue that if the test rejects the null, then weakness in instruments may arise. However, the test lacks power in detecting weak instruments \citep{hausman2005asymptotic}. \cite{lee2012hahn} proved that it is indeed a test for the exogeneity of the instruments. For the case of single endogenous variable and many instruments, \cite{mikusheva2021inference} defined the instrument set to be weak when the concentration parameter in bounded in $\sqrt{K_n}$, and developed a pre-test for the weakness of the instruments in a spirit of the first-stage $F$ test.

In this paper, our contribution is to propose a specification test to distinguish between two different cases where the concentration parameter is at the same rate or a smaller rate of $K_n$ within the many instruments setup. Determining the order of the concentration parameter relative to $K_n$ is essential as the asymptotic behavior of the IV estimators and related inference methods heavily rely on it \citep{chao2006asymptotic,anatolyev2011specification}.
Our proposed test can also be viewed as a follow-up procedure to the test proposed in \cite{mikusheva2021inference}: if their test concludes that the ratio of the concentration parameter over $\sqrt{K_n}$ is large, then one can apply our test to gain further insight into the employed instruments set by testing whether the concentration parameter is smaller than $K_n$ in order.

Unlike the tests in \cite{stock2002testing} and \cite{mikusheva2021inference} that can depict the magnitude of the concentration parameter, our test is in the same manner of the test proposed in \cite{hahn2002new}, where they proposed a procedure to test for strong instruments by means of comparing  forward and reverse 2SLS. Our test is based on the finding that the difference between the 2SLS and OLS estimators disappears under many weak instruments asymptotics but deviates from zero under many strong instruments asymptotics. As an important technical contribution, we establish the joint distribution of 
 seven quadratic/bilinear forms that make up the 2SLS and OLS estimators. This complex joint distribution has its own interest: it can be used for the study of other related estimator such as the B2SLS estimator \citep{nagar1959bias} in this context of many weak instruments. As the asymptotic covariance matrix has no explicit form,   
 to implement our test we introduce a subsampling technique and propose a delete-$d$ Jackknife covariance matrix estimator. Both the theory and Monte Carlo experiments show that the test is robust to the number of endogenous variables and the type of error distributions.

The paper is organised as follows. In Section \ref{sec:model} we introduce the model and assumptions, followed by a discussion of the order of the concentration parameter. In Section \ref{sec:test}, we first give the convergence results of the 2SLS and OLS estimators under many strong instruments and many weak instruments asymptotics in Section \ref{subsec:diff}. Then in Section \ref{subsec:clt}, we derive the limiting distributions of the difference of the 2SLS and OLS estimators under the above two specifications. In Section \ref{subsec:jack} and \ref{subsec:test}, we show that the delete-$d$ Jackknife is a valid estimation method for the asymptotic covariance matrix under the null and formalize the new test.  Section \ref{sec:mc} reports the results of Monte Carlo simulations, followed by an empirical analysis in Section \ref{sec:emp} where we re-exaimine the results in \cite{angrist1991does}. Section \ref{sec:con} provides concluding remarks and further discussion.

\textbf{Notations.} Throughout the paper, for a vector $\mathbf{a}$, $\mathbf{a}'$ denotes its transpose. For an $a\times b$ matrix $\mathbf{A}$, we denote by $\mathbf{A}_i$ its $i$-th row, by $\mathbf{A}(j)$ its $j$-th column, and by $A_{ij}$ its $(i,j)$ entry, so that $\mathbf{A}=[\mathbf{A}_1',\mathbf{A}_2',\dots,\mathbf{A}_{a}']'=[\mathbf{A}(1),\mathbf{A}(2),\dots,\mathbf{A}(b)]$.  In addition, $\|\cdot\|$ represents the Euclidean norm for a vector and the induced operator norm for a matrix. $\mathbf{I}_{a}$ and $\mathbf{0}_{a}$ are $a \times a$ identity and null matrices, with $\mathbf{0}_{a \times b}$ being $a \times b$ matrix of zeros; $\boldsymbol{J}^{ij}$ is the single-entry matrix with 1 at $(i, j)$ and zero elsewhere and $``\boldsymbol{j}^{i}$ is the single-entry vector with 1 at $i$-th entry and zero elsewhere. Finally, $``\stackrel{p}{\rightarrow} "$ denotes the convergence in probability and $``\stackrel{d}{\rightarrow}$ " the convergence in distribution.

\section{Model and assumptions} \label{sec:model}
Consider the following model:
\begin{equation}
    \y=\Y\bbeta+\uu,
\end{equation}
\begin{equation}
    \Y=\Z\PPi+\V,
\end{equation} where $\y$ and $\Y$ are, respectively, an $n\times 1$ vector and an $n \times p$ matrix of observations on the endogenous variables of the system, $\Z$ is an $n\times K_n$ matrix of observations on the $K_n$ instrumental variables, and $\uu$ and $\V$ are, respectively, an $n\times 1$ vector and an $n \times p$ matrix of random disturbances. Within this setup, we specify $\PPi \left(K_{n} \times p\right)$ to depend on $n$ in order to model the effects of having many weak instruments. In particular, the effect of many instruments can be examined by letting $K_{n} \rightarrow \infty$ as $n \rightarrow \infty$, while the effect of weak instruments can be accounted for by shrinking $\PPi$ toward a zero matrix as $n$ grows. We also drop other potential exogenous variables without loss of generality. The following assumptions are used in the sequel.

\begin{assumption}\label{assum:1}(a) As $n\rightarrow\infty$, $K_n/n \rightarrow\alpha  \in (0,1)$; and (b) there exists a non-decreasing sequence of positive real numbers $\{s_n\}$ such that $s_n/n\rightarrow\kappa\in [0,\infty)$, and $
\boldsymbol{\Theta}_n=\PPi'\Z'\Z\PPi/s_n {\rightarrow} \boldsymbol{\Theta}$ almost surely, for some $p\times p$ nonrandom positive definite matrix $\boldsymbol{\Theta}$.
\end{assumption} 

\begin{assumption}\label{assum:2}(a) $\Z$ is independent of $\uu$ and $\V$, and the $\Z_i$'s are independently and identically distributed (i.i.d.); (b) $(\V_i,u_i)'$ is i.i.d. with zero mean and the    covariance matrix of $\boldsymbol{\Sigma}$ satisfying that $\mathrm{Cov}(\V_i)=\boldsymbol{\Sigma}_{VV}$, $\mathrm{Var}(u_i)=\sigma_u^2$, and $\mathrm{Cov}(\V_i,u_i)=\boldsymbol{\Sigma}_{Vu}$; and (c) there exist some positive constants $C_1<\infty$ and $C_2<\infty$ such that $E(u_i^4)<C_1$ and $E(V_{ij}^4)<C_2$.  
\end{assumption} 

Assumption \ref{assum:1}(a) adopts the many instruments asymptotic framework. We are particularly interested in the case where $K_n$ and $n$ are of the same order. Given that the concentration matrix, $\boldsymbol{\Sigma}_{VV}^{-1/2}\PPi'\Z'\Z\PPi\boldsymbol{\Sigma}_{VV}^{-1/2}$, is a natural measure of the instrument strength, Assumption \ref{assum:1}(b) models the strength of instruments by the order of the magnitude of $s_n$. We focus on the cases where $s_n$ grows no faster than $n$. In general, the slower the divergence of $s_n$ is, the weaker the instruments are. Assumption \ref{assum:2}(a) assumes the instruments are valid and i.i.d., and Assumption \ref{assum:2}(b) and (c) require the error terms to be homoscedastic and have the finite fourth moment.

A key parameter in the discussion is $s_n$, which measures the strength of the instrument set. The parameter plays an essential role in IV regressions: For example, the consistent estimation and the reliable inference of $\bbeta $ crucially depend on this parameter.  To further clarify the discussion, we propose the following taxonomy of instruments in terms of their strength: 
\begin{itemize}
    \item[(a)] The instrument set is ``completely weak" if $\frac{s_n}{\sqrt{n}} \rightarrow \kappa_1 \in [0, \infty),$ as $n\rightarrow \infty$.
    \item[(b)] The instrument set is ``moderately weak" if $\frac{s_n}{\sqrt{n}} \rightarrow  \infty$, but $\frac{s_n}{n}\rightarrow 0$ as $n\rightarrow \infty$.
    \item[(c)] The instrument set is ``strong" if $\frac{s_n}{n}\rightarrow \kappa_0 \in (0, \infty)$ as $n\rightarrow \infty$.
\end{itemize}

In case (a), no findings have been reported on the consistent estimation of $\bbeta$ to date, and \cite{mikusheva2021inference} (hereafter referred to as MS2022) show that a consistent test is absent as well. In case (b), \cite{chao2005consistent} show that the 2SLS estimator is inconsistent but consistent estimation is still feasible by using estimators such as the B2SLS  and LIML \citep{anderson1949estimation} estimators. However, they have a slow convergence rate  $\frac{s_n}{\sqrt{n}}$ within this specification \citep{chao2006asymptotic, hansen2008estimation}. In this paper we call the set of instruments ``weak" if $s_n=o(n)$ by integrating the cases (a) and (b): instruments are weak as a group and it refers to the many-weak-instrument asymptotics \citep{hausman2012instrumental, bekker2015jackknife}.
In case (c), $s_n$ has the order of $n$ and the concentration parameter grows as fast as the sample size. This case corresponds to the standard many-strong-instrument asymptotics: instruments are many and
they are strong as a group \citep{donald2001choosing,anderson2010asymptotic,anatolyev2013instrumental}. In this case, the 2SLS estimator is still inconsistent, and the B2SLS and the LIML estimators remain consistent with the convergence rate $\sqrt{n}$. 

In empirical applications, it is of practical importance to distinguish between the case of $\frac{s_n}{n}\rightarrow 0$ and the case of $\frac{s_n}{n}\rightarrow \kappa_0>0$. Firstly, as mentioned above,  the B2SLS, LIML and JIVE \citep{angrist1995split} estimators have the normal convergence rate $\sqrt{n}$ in the latter case. However, in the former case, they have the slower convergence rate $\frac{s_n}{\sqrt{n}}$ with  moderately weak instruments and become inconsistent with completely weak instruments. For example, when  $s_nn^{\gamma+1/2}$ for a constant $0<\gamma<1/2$, the convergence rate is then $n^\gamma$, which can be slow (e.g. $\gamma=0.1$) even though consistency can still be achieved. Such slow convergence rate of the aforementioned estimators can lead to biased results in finite sample. Besides, some robust inference methods with many instruments require the instruments to be strong, i.e., they are developed under many-strong-instrument asymptotics. For example, \cite{lee2012hahn} modified the  overidentification $J$ test for testing the hypothesis of valid instruments and established the asymptotic normality the modified test statistic when $\frac{s_n}{n}\rightarrow \kappa_0$. Similar results can also be found in \cite{anatolyev2011specification}. If the instruments are not strong,  then one should be cautious about the usage of their proposed tests. Therefore, determining the choice of many-instrument aymptotics is crucial when applying such inference methods. We therefore consider testing the hypothesis
\begin{equation}
    H_0 : \frac{s_n}{n}\rightarrow 0 \quad \quad \text{v.s.}\quad \quad H_1 : \frac{s_n}{n}\rightarrow \kappa_0 \in(0,\infty). \nonumber
\end{equation}

\section{The new specification test} \label{sec:test}
In this section, we introduce our testing procedure to test for the strength of the available instrument set.
\subsection{Limits of 2SLS and OLS} \label{subsec:diff}
To motivate the new specification test, we firstly study  the limits of $\hat{\bbeta}^{2SLS}=(\Y'\PP_{Z}\Y)^{-1}$\\$\Y'\PP_{Z}\y$ and $\hat{\bbeta}^{OLS}=(\Y'\Y)^{-1}\Y'\y$ under many instruments asymptotics in the following.
\begin{theorem} Assume that Assumptions \ref{assum:1} and \ref{assum:2} hold. Then, as $n\rightarrow\infty$, 
(a) under $H_1$, $\hat{\bbeta}^{2SLS}  \stackrel{p}{\rightarrow} \bbeta+ (\kappa_0/\alpha\boldsymbol{\Theta}+\boldsymbol{\Sigma}_{VV})^{-1}\boldsymbol{\Sigma}_{Vu}$, $\hat{\bbeta}^{OLS} \stackrel{p}{\rightarrow} \bbeta+ (\kappa_0\boldsymbol{\Theta}+\boldsymbol{\Sigma}_{VV})^{-1}\boldsymbol{\Sigma}_{Vu}$; and (b) under $H_0$, both $\hat{\bbeta}^{2SLS}$ and $\hat{\bbeta}^{OLS}$ converge in probability to $ \bbeta+ \boldsymbol{\Sigma}_{VV}^{-1}\boldsymbol{\Sigma}_{Vu}$.\label{theo:limit} 
\end{theorem}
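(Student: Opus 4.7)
My plan is to expand both estimators into their bilinear/quadratic building blocks in $(\Z,\PPi,\V,\uu)$ and identify the correct normalisation so that every block has a deterministic limit. Writing
\[
\hat{\bbeta}^{OLS}-\bbeta=\Bigl(\tfrac{1}{n}\Y'\Y\Bigr)^{-1}\tfrac{1}{n}\Y'\uu,\qquad
\hat{\bbeta}^{2SLS}-\bbeta=\Bigl(\tfrac{1}{K_n}\Y'\PP_{Z}\Y\Bigr)^{-1}\tfrac{1}{K_n}\Y'\PP_{Z}\uu,
\]
and substituting $\Y=\Z\PPi+\V$, each numerator and denominator decomposes into four pieces. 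The conclusion then follows from the continuous mapping theorem (plus Slutsky) as soon as each piece has an explicit probability limit, provided the limits being inverted are non-singular; and indeed $\kappa_0\boldsymbol{\Theta}+\boldsymbol{\Sigma}_{VV}$, $(\kappa_0/\alpha)\boldsymbol{\Theta}+\boldsymbol{\Sigma}_{VV}$ and $\boldsymbol{\Sigma}_{VV}$ are all positive definite under Assumptions~\ref{assum:1}--\ref{assum:2}.

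For OLS, Assumption~\ref{assum:1}(b) gives $\PPi'\Z'\Z\PPi/n=(s_n/n)\boldsymbol{\Theta}_n\to\kappa_0\boldsymbol{\Theta}$ a.s.\ under $H_1$ and $\to\mathbf{0}$ under $H_0$; the classical strong LLN gives $\V'\V/n\to\boldsymbol{\Sigma}_{VV}$ and $\V'\uu/n\to\boldsymbol{\Sigma}_{Vu}$; and the two cross terms, typified by $\PPi'\Z'\V/n$, have zero mean with squared Frobenius norm whose conditional-on-$\Z$ expectation equals $\mathrm{tr}(\boldsymbol{\Sigma}_{VV})\,\mathrm{tr}(\PPi'\Z'\Z\PPi)/n^{2}=O(s_n/n^{2})=o(1)$ under either hypothesis, so they are $o_p(1)$. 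For 2SLS I would normalise by $K_n$ instead of $n$: then $\PPi'\Z'\Z\PPi/K_n=(s_n/n)(n/K_n)\boldsymbol{\Theta}_n$ converges to $(\kappa_0/\alpha)\boldsymbol{\Theta}$ under $H_1$ and to $\mathbf{0}$ under $H_0$, while the cross terms $\PPi'\Z'\V/K_n$ and $\PPi'\Z'\uu/K_n$ again vanish by the same second-moment bound together with $K_n\asymp n$.

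The one substantive step is the $K_n$-scaled projection quadratic/bilinear forms
\[
\frac{1}{K_n}\V'\PP_{Z}\V\stackrel{p}{\to}\boldsymbol{\Sigma}_{VV},\qquad
\frac{1}{K_n}\V'\PP_{Z}\uu\stackrel{p}{\to}\boldsymbol{\Sigma}_{Vu},
\]
and this is where I expect the main technical obstacle. Conditionally on $\Z$, the $(j,k)$-entry of $\V'\PP_{Z}\V$ has mean $\mathrm{tr}(\PP_{Z})(\boldsymbol{\Sigma}_{VV})_{jk}=K_n(\boldsymbol{\Sigma}_{VV})_{jk}$, and the standard variance identity for quadratic forms in i.i.d.\ vectors with finite fourth moments bounds the conditional variance by a constant multiple of $\mathrm{tr}(\PP_{Z}^{2})+\sum_i(\PP_{Z})_{ii}^{2}\le 2K_n$ (using $\PP_Z^2=\PP_Z$ and $(\PP_Z)_{ii}\le 1$); an analogous bilinear-form calculation, using independence of $\Z$ from $(\V,\uu)$ and Assumption~\ref{assum:2}(c), handles $\V'\PP_{Z}\uu$. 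Dividing by $K_n^2$ and removing the conditioning via dominated convergence yields $L^2$-convergence, hence convergence in probability. Assembling all the limits via the continuous mapping theorem then delivers the four stated probability limits for $\hat{\bbeta}^{OLS}$ and $\hat{\bbeta}^{2SLS}$ and completes the proof of Theorem~\ref{theo:limit}.
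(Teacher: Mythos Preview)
Your proposal is correct and follows essentially the same route as the paper: expand $\hat{\bbeta}^{2SLS}-\bbeta$ and $\hat{\bbeta}^{OLS}-\bbeta$ into the bilinear/quadratic pieces in $(\Z\PPi,\V,\uu)$, obtain a probability limit for each piece under the appropriate normalisation, and combine via Slutsky/continuous mapping. The only difference is cosmetic: the paper packages the six limits $\PPi'\Z'\uu/K_n\to\mathbf{0}$, $\PPi'\Z'\V/K_n\to\mathbf{0}$, $\V'\PP_Z\uu/K_n\to\boldsymbol{\Sigma}_{Vu}$, $\V'\PP_Z\V/K_n\to\boldsymbol{\Sigma}_{VV}$, $\V'\uu/n\to\boldsymbol{\Sigma}_{Vu}$, $\V'\V/n\to\boldsymbol{\Sigma}_{VV}$ into a lemma and cites \cite{chao2005consistent} for the projection terms, whereas you supply the self-contained conditional-variance argument (using $\mathrm{tr}(\PP_Z^2)=K_n$ and $(\PP_Z)_{ii}\le 1$) directly.
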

It has been well documented that 2SLS suffers from the  many instruments problem. Theorem \ref{theo:limit} provides a more quantitatively asymptotic analysis of the phenomenon. Part (a) shows that when the instruments are strong as a group, the asymptotic bias of the 2SLS and OLS estimators differs, and the bias of 2SLS is strictly smaller than the bias of OLS. As the number of instruments increases, the asymptotic bias of 2SLS gets more severe, which is also observed by \cite{bekker1994alternative}. When $\alpha$ is close to zero (a fixed number of instruments), 2SLS will apply with little bias, and when $\alpha$ approaches one, 2SLS is biased towards OLS \citep{buse1992bias}.  Part (b) indicates that this difference disappears in the presence of many weak instruments. We also find that the asymptotic bias of two estimators in (b) becomes greater compared with those in (a), suggesting that weakness in instruments exacerbates the estimation bias in many instruments asymptotics.

\subsection{Limiting distribution of the difference between 2SLS and OLS} \label{subsec:clt}
Inspired by Theorem \ref{theo:limit}, we adopt the ideas of the specification test approach of \cite{hausman1978specification} involving the 2SLS and OLS estimators and see how far apart they are. If the difference between the two estimators is small, one should not reject the asymptotics adopted in the mode. If the difference is large, one should come to the opposite conclusion. We firstly derive the limiting distribution of the difference $\hat{\bbeta}^{2SLS}-\hat{\bbeta}^{OLS}$ under both many weak and many strong asymptotics in the following theorem, 
\begin{theorem} \label{theo:stat}(a) Under $H_0$, Assumptions \ref{assum:1} and \ref{assum:2}, as $n\rightarrow\infty$,
\begin{equation} \label{null_variance}
\sqrt{n}\left( \hat{\bbeta}^{2SLS}- \hat{\bbeta}^{OLS} \right)\stackrel{d}{\rightarrow}N_p(\boldsymbol{0}, \boldsymbol{\Sigma}_{0}), 
\end{equation}
where $\boldsymbol{\Sigma}_{0}=-\boldsymbol{g}_3\boldsymbol{\Sigma}_{3}\boldsymbol{g}_3'+\frac{1}{\alpha^2}\boldsymbol{g}_3\boldsymbol{\Sigma}_{4}\boldsymbol{g}_3'$ with $\boldsymbol{g}_3$, $\boldsymbol{g}_4$, $ \boldsymbol{\Sigma}_{3}$ and $\boldsymbol{\Sigma}_{4}$ defined in (\ref{g3}),  (\ref{g4}),  (\ref{sigma3}) and (\ref{sigma4}), respectively.

(b) Under $H_1$, Assumptions \ref{assum:1} and \ref{assum:2},  as $n\rightarrow\infty$,
\begin{equation} \label{h1b}
\sqrt{n}\left( \hat{\bbeta}^{2SLS}- \hat{\bbeta}^{OLS}-\boldsymbol{\Delta} \right)\stackrel{d}{\rightarrow}N_p(\boldsymbol{0},\boldsymbol{\Sigma}_{A}), 
\end{equation}
where 
    $\boldsymbol{\Delta}=(\kappa_0/\alpha\boldsymbol{\Theta}+\boldsymbol{\Sigma}_{VV})^{-1}\boldsymbol{\Sigma}_{Vu}-(\kappa_0\boldsymbol{\Theta}+\boldsymbol{\Sigma}_{VV})^{-1}\boldsymbol{\Sigma}_{Vu}+O_p(n^{-1/2})$ and  $\boldsymbol{\Sigma}_{A}=\boldsymbol{h}_1\boldsymbol{\Sigma}_{1}\boldsymbol{h}_1'+\boldsymbol{h}_2\boldsymbol{\Sigma}_{2}\boldsymbol{h}_2'+\boldsymbol{h}_3\boldsymbol{\Sigma}_{3}\boldsymbol{h}_3'+\alpha\boldsymbol{h}_3\boldsymbol{\Sigma}_{3}\boldsymbol{h}_4'+\alpha\boldsymbol{h}_4\boldsymbol{\Sigma}_{3}\boldsymbol{h}_3'+\boldsymbol{h}_4\boldsymbol{\Sigma}_{4}\boldsymbol{h}_4'$ with $\boldsymbol{h}_1$, $\boldsymbol{h}_2$, $\boldsymbol{h}_3$, $\boldsymbol{h}_4$, $\boldsymbol{\Sigma}_{1}$, $\boldsymbol{\Sigma}_{2}$ defined in (\ref{h1}), (\ref{h2}), (\ref{h3}), (\ref{h4}), (\ref{sigma1}) and (\ref{sigma2}), respectively.
\end{theorem}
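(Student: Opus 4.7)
The plan is to decompose $\hat{\bbeta}^{2SLS}-\hat{\bbeta}^{OLS}$ into a smooth function of seven quadratic and bilinear forms built from the model primitives, establish the joint asymptotic normality of these forms, and then transfer the CLT through a first-order delta-method expansion.

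First I would use $\y=\Y\bbeta+\uu$ and $\Y=\Z\PPi+\V$ to write
$$\hat{\bbeta}^{2SLS}-\hat{\bbeta}^{OLS}=(\Y'\PP_Z\Y)^{-1}\Y'\PP_Z\uu-(\Y'\Y)^{-1}\Y'\uu,$$
and expand each factor to expose the seven pieces
$$\PPi'\Z'\Z\PPi,\ \PPi'\Z'\V,\ \V'\PP_Z\V,\ \PPi'\Z'\uu,\ \V'\PP_Z\uu,\ \V'\V,\ \V'\uu.$$
Assumption~\ref{assum:1}(b) handles the first piece. The two linear-in-$\V/\uu$ pieces $\PPi'\Z'\V$ and $\PPi'\Z'\uu$ are, conditionally on $\Z$, weighted sums of i.i.d.\ rows and fluctuate at rate $\sqrt{s_n}$, contributing the covariance blocks $\SSigma_1$ and $\SSigma_2$. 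The remaining four pieces $\V'\V,\V'\uu,\V'\PP_Z\V,\V'\PP_Z\uu$ are bilinear or quadratic in $(\V,\uu)$ with weighting matrix $\mathbf{I}_n$ or $\PP_Z$; after subtracting the means $n\SSigma_{VV},n\SSigma_{Vu},K_n\SSigma_{VV},K_n\SSigma_{Vu}$ they fluctuate at rate $\sqrt{n}$ and produce the blocks $\SSigma_3,\SSigma_4$ together with their cross-covariance.

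Next, to establish the joint CLT I would condition on $\Z$, apply a martingale-difference CLT (or de Jong's CLT for generalised quadratic forms) to the bilinear/quadratic pieces and invoke Lindeberg--Feller for the linear pieces; independence of $\Z$ from $(\V,\uu)$ together with the fourth-moment condition~\ref{assum:2}(c) then lets me decondition. A key calculation is the cross-covariance between, e.g., $\V'\uu$ and $\V'\PP_Z\uu$, which turns out to equal $K_n$ times $\mathrm{Var}(V_iu_i)$ (up to lower-order terms) and accounts both for the $\alpha$-weighted cross terms $\alpha\boldsymbol{h}_3\SSigma_3\boldsymbol{h}_4'$ in $\SSigma_A$ and for the minus sign in $\SSigma_0$. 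A first-order Taylor expansion of the matrix inverses around the probability limits of $\Y'\PP_Z\Y/n$ and $\Y'\Y/n$ (which by Theorem~\ref{theo:limit} equal $\kappa\boldsymbol{\Theta}+\alpha\SSigma_{VV}$ and $\kappa\boldsymbol{\Theta}+\SSigma_{VV}$, with $\kappa$ equal to $0$ under $H_0$ and $\kappa_0$ under $H_1$) then converts $\sqrt{n}(\hat{\bbeta}^{2SLS}-\hat{\bbeta}^{OLS}-\boldsymbol{\Delta})$ into a linear functional of the centred and rescaled forms, and the stated covariance can be read off.

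The hard part is the joint CLT for the four centred bilinear/quadratic forms in $(\V,\uu)$ that share the random projector $\PP_Z$. The projector couples all row-by-row increments, so the martingale decomposition must simultaneously track contributions from its diagonal entries (which enter trace-type centring constants) and its off-diagonal entries (which dominate the fluctuations), as well as their interaction with the identity-weighted forms $\V'\V,\V'\uu$. Controlling the fourth-order cumulants that populate the $\SSigma_3,\SSigma_4$ cross block is essential, because these cumulants determine both the sign structure of $\SSigma_0$ under $H_0$ (where $s_n/n\to 0$ kills the contributions of $\PPi'\Z'\V,\PPi'\Z'\uu$ and collapses the sensitivity to a single direction $\boldsymbol{g}_3$) and the precise weights $\boldsymbol{h}_1,\ldots,\boldsymbol{h}_4$ together with the bias term $\boldsymbol{\Delta}$ under $H_1$.
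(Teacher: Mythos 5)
Your proposal is correct and follows the same overall architecture as the paper's proof: the same decomposition of $\hat{\bbeta}^{2SLS}-\hat{\bbeta}^{OLS}$ into the seven quadratic/bilinear forms, a joint CLT for those forms, and a first-order delta-method transfer whose gradient collapses under $H_0$ to the single direction $\boldsymbol{g}_3$ (with $\boldsymbol{g}_4=-\alpha^{-1}\boldsymbol{g}_3$) because $s_n/n\to 0$ kills the $\PPi'\Z'\Z\PPi$, $\PPi'\Z'\V$ and $\PPi'\Z'\uu$ blocks and makes the centering $o_p(n^{-1/2})$. Where you genuinely diverge is in the engine for the joint CLT of Lemma \ref{lem:7comp}: you propose to build it by hand, conditioning on $\Z$ and running a martingale-difference/de Jong-type CLT for the $\PP_Z$-weighted forms jointly with Lindeberg--Feller for the linear pieces, whereas the paper stacks all seven components into sesquilinear forms $\widetilde{\X}(l)'\mathbf{A}_n\widetilde{\Y}(l)$ with $\mathbf{A}_n\in\{\mathbf{I}_n,\PP_Z\}$ and invokes the off-the-shelf joint CLT of Lemma \ref{lem:wang} (Wang et al.), which delivers the full covariance structure --- the limits $w_i,\theta_i$, hence the weights $\alpha$, $\alpha^2$ in $\boldsymbol{\Sigma}_4$ and the cross block $\alpha\boldsymbol{\Sigma}_3$ --- in closed form. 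Your route is the standard one in the many-instruments literature and would work, but it costs you a Cram\'er--Wold argument over mixed $\mathbf{I}_n$- and $\PP_Z$-weighted forms and an explicit verification that $\frac{1}{n}\sum_i P_{ii}^2\to\alpha^2$, all of which the paper gets for free from the cited theorem; in exchange your argument is self-contained and makes the source of the minus sign in $\boldsymbol{\Sigma}_0$ transparent (your cross-covariance $K_n\,\mathrm{Var}(V_iu_i)$ between $\V'\uu$ and $\V'\PP_Z\uu$ is exactly the normalized block $\alpha\boldsymbol{\Sigma}_3$ that, combined with $\boldsymbol{g}_4=-\alpha^{-1}\boldsymbol{g}_3$, yields $-\boldsymbol{g}_3\boldsymbol{\Sigma}_3\boldsymbol{g}_3'$). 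One small labeling slip: in the paper $\boldsymbol{\Sigma}_1$ is the block attached to $\mathrm{vec}(\PPi'\Z'\Z\PPi)$ and $\boldsymbol{\Sigma}_2$ the joint block of $(\PPi'\Z'\V,\PPi'\Z'\uu)$, rather than one block per linear piece as you describe, but this does not affect the substance of your argument.
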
 

An important case in the empirical literature is $p=1$, that is a single endogenous variable, where we have the following result.
\begin{corollary} \label{cor:p1}
When $p=1$, under $H_0$, Assumptions \ref{assum:1} and \ref{assum:2}, as $n\rightarrow\infty$,
\begin{equation}
\sqrt{n}\left( \hat{\beta}^{2SLS}- \hat{\beta}^{OLS} \right)\stackrel{d}{\rightarrow}N(0,\sigma^2),
\end{equation}
where $ \sigma^2=\frac{1-\alpha}{\alpha}(\frac{\sigma_{u}^2}{\sigma_{vv}^2}-\frac{\sigma_{vu}^2}{\sigma_{vv}^4})$ with $\sigma_{u}^2, \sigma_{vu}$ and $\sigma_{vv}^2$ the corresponding variance and covariance of the errors.
\end{corollary}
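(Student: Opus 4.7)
The plan is to obtain Corollary~\ref{cor:p1} by directly specializing the CLT in Theorem~\ref{theo:stat}(a) to the scalar case $p=1$. In this setting $\boldsymbol{\Sigma}_{VV}$, $\boldsymbol{\Sigma}_{Vu}$, and $\boldsymbol{\Theta}$ collapse to the scalars $\sigma_{vv}^2$, $\sigma_{vu}$, and a positive $\theta$, and the quantities $\boldsymbol{g}_3,\boldsymbol{\Sigma}_3,\boldsymbol{\Sigma}_4$ defining $\boldsymbol{\Sigma}_0$ reduce to explicit expressions in $\alpha$, $\sigma_u^2$, $\sigma_{vu}$, $\sigma_{vv}^2$ and the fourth moments of $u_i$ and $V_i$. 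The task reduces to applying a delta-method argument linking $\hat\beta^{2SLS}-\hat\beta^{OLS}$ to the four underlying quadratic/bilinear forms and verifying that the fourth-moment contributions cancel so as to leave the tidy expression in the statement.

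Concretely, I would start from
\[
\hat\beta^{2SLS} - \hat\beta^{OLS} \;=\; \frac{\Y'\PP_{Z}\uu}{\Y'\PP_{Z}\Y} \;-\; \frac{\Y'\uu}{\Y'\Y},
\]
and use $\Y=\Z\PPi+\V$ under $H_0$ ($s_n=o(n)$) to replace $\Y$ by $\V$ up to an $o_p(n^{-1/2})$ error; the $\PPi$ pieces drop because $\PPi'\Z'\Z\PPi=O_p(s_n)$ and $\PPi'\Z'\V=O_p(\sqrt{ns_n})$. Then I would apply the delta method to
\[
T_n \;=\; n^{-1}\bigl(\V'\PP_{Z}\uu,\;\V'\PP_{Z}\V,\;\V'\uu,\;\V'\V\bigr),
\]
which converges in probability to $(\alpha\sigma_{vu},\alpha\sigma_{vv}^2,\sigma_{vu},\sigma_{vv}^2)$. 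At this limit the gradient of $f(x_1,x_2,x_3,x_4)=x_1/x_2-x_3/x_4$ is
\[
\mathbf{g} \;=\; \Bigl(\tfrac{1}{\alpha\sigma_{vv}^2},\,-\tfrac{\sigma_{vu}}{\alpha\sigma_{vv}^4},\,-\tfrac{1}{\sigma_{vv}^2},\,\tfrac{\sigma_{vu}}{\sigma_{vv}^4}\Bigr),
\]
which satisfies the key algebraic identities $g_3=-\alpha g_1$ and $g_4=-\alpha g_2$. The joint CLT for $\sqrt n(T_n-\text{limit})$, which is precisely the ingredient supplied by the multivariate joint normality behind Theorem~\ref{theo:stat}, then reduces the problem to computing $\mathbf{g}'\boldsymbol{\Sigma}_T\mathbf{g}$, where $\boldsymbol{\Sigma}_T$ is read off from Isserlis/Wick-type identities together with $\mathrm{tr}(\PP_{Z})=\mathrm{tr}(\PP_{Z}^2)=K_n$.

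The main obstacle is verifying that the non-Gaussian pieces of $\boldsymbol{\Sigma}_T$ — driven by the excess fourth-moment quantities $\mu_4^V-3\sigma_{vv}^4$, $E[V_i^2 u_i^2]-\sigma_{vv}^2\sigma_u^2-2\sigma_{vu}^2$, and $E[V_i^3 u_i]-3\sigma_{vu}\sigma_{vv}^2$ — all cancel. These enter $\boldsymbol{\Sigma}_T$ only through the diagonal sums $\sum_i(\PP_{Z})_{ii}^a$ for $a\in\{1,2\}$, which under Assumption~\ref{assum:2}(a) concentrate at $K_n$ and $\alpha K_n$ respectively; their aggregated contribution to $\mathbf{g}'\boldsymbol{\Sigma}_T\mathbf{g}$ factorizes, after rescaling by $n$, as a combination of $(\alpha g_1+g_3)^2$, $(\alpha g_2+g_4)^2$, and $(\alpha g_1+g_3)(\alpha g_2+g_4)$, each of which vanishes by the gradient identities above. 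With the non-Gaussian parts out of the way, the surviving Gaussian-type variance sum collapses after elementary algebra to exactly
\[
\sigma^2 \;=\; \frac{1-\alpha}{\alpha}\left(\frac{\sigma_u^2}{\sigma_{vv}^2}-\frac{\sigma_{vu}^2}{\sigma_{vv}^4}\right),
\]
matching the claim.
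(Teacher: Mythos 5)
Your proposal is correct and lands on the same skeleton as the paper -- a delta-method argument applied to a joint CLT for the quadratic/bilinear forms built from $\PP_Z$ and $\mathbf{I}_n$ -- but the decisive computation is done by a genuinely different route. The paper specializes the general expression $\sigma^2=-\boldsymbol{g}_3\boldsymbol{\Sigma}_3\boldsymbol{g}_3'+\alpha^{-2}\boldsymbol{g}_3\boldsymbol{\Sigma}_4\boldsymbol{g}_3'$ to $p=1$ and evaluates $\boldsymbol{\Sigma}_3$ via Isserlis' theorem (computing $\mathrm{Var}(v_1^2)=2\sigma_{vv}^4$ and $\mathrm{E}(v_1^3u_1)=3\sigma_{vv}^2\sigma_{vu}$), then uses the relation $\boldsymbol{\Sigma}_4=\alpha\boldsymbol{\Sigma}_3$; this implicitly assumes jointly Gaussian errors, since in general $\boldsymbol{\Sigma}_4=\alpha^2\boldsymbol{\Sigma}_3+\alpha(1-\alpha)S$ with $S$ the matrix of second-moment products, and $\boldsymbol{\Sigma}_4=\alpha\boldsymbol{\Sigma}_3$ only when the fourth-moment covariances coincide with $S$ (the Isserlis identity). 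You instead work with the four forms $(\V'\PP_Z\uu,\V'\PP_Z\V,\V'\uu,\V'\V)$, exploit the gradient identities $g_3=-\alpha g_1$, $g_4=-\alpha g_2$, and show that the excess-kurtosis contributions aggregate into $(\alpha g_1+g_3)^2$, $(\alpha g_2+g_4)^2$ and their cross term, hence vanish identically; what survives is exactly the $(\theta_2-w_2)=\alpha(1-\alpha)$ block, which reduces to the stated $\sigma^2$. I verified this computation and it is right. Your route buys something real: it proves the corollary under the stated Assumption \ref{assum:2} (finite fourth moments only), which matters because the simulations use $t_5$ errors, whereas the paper's written proof is tied to Gaussianity. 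One caveat you share with the paper: the claim that the $\PPi$-terms are negligible at the $\sqrt{n}$ scale requires $s_n=o(\sqrt{n})$; under moderately weak instruments ($\sqrt{n}\ll s_n\ll n$) the deterministic shift $s_n\boldsymbol{\Theta}_n/n$ in the two denominators induces a drift of order $s_n/\sqrt{n}$ in $\sqrt{n}(\hat\beta^{2SLS}-\hat\beta^{OLS})$, so your $o_p(n^{-1/2})$ bound (like the paper's assertion that $\boldsymbol{f}(\boldsymbol{\theta})=o_p(n^{-1/2})$) is only justified in the completely weak regime; this is a gap in the source, not one you introduced.
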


Note that in the case of a single endogenous variable and under the null, the asymptotic variance of the difference has an explicit form. However, consistently estimating $\sigma_{u}^2$ and $\sigma_{vu}$ is infeasible with completely weak instruments. Besides, the asymptotic covariance matrix in (\ref{null_variance}) has no explicit form for multiple endogenous variables.  To implement a test procedure, we therefore propose a consistent variance/covariance estimator based on the delete-$d$ jackknife variance/covariance estimation following a theory proposed in \cite{shao1989general}. 

\subsection{Jackknife covariance estimation} \label{subsec:jack}
Let $d=\lambda n$ be an integer for $0<\lambda<1$ and $r=n-d$. Define $\mathbf{S}_{r}$ to be the collection of subsets of $\{1, \cdots, n\}$ with size $r$. For $s=\left\{i_{1}, \ldots, i_{r}\right\} \in \mathbf{S}_{r}$, let $\hat{\boldsymbol{\theta}}_s=\hat{\bbeta}_s^{2SLS}-\hat{\bbeta}_s^{OLS}$ be the subsample estimation, where $\hat{\bbeta}_s^{2SLS}$ and $\hat{\bbeta}_s^{OLS}$ are the 2SLS and OLS estimates based on the corresponding subsample. The delete-$d$ jackknife estimator of $\boldsymbol{\Sigma}_0$ is then
\begin{equation} \label{v_d}
    \widehat{\boldsymbol{\Sigma}}_0=\frac{nr}{d N} \sum_{s \in \mathbf{S}_{r}}\left(\hat{\boldsymbol{\theta}}_s-\frac{1}{N}\sum_{t \in \mathbf{S}_{r}}\hat{\boldsymbol{\theta}}_t\right)\left(\hat{\boldsymbol{\theta}}_s-\frac{1}{N}\sum_{t \in \mathbf{S}_{r}}\hat{\boldsymbol{\theta}}_t\right)',
\end{equation}
where $N={n \choose d} $. To establish the consistency and asymptotic unbiasedness of $\widehat{\boldsymbol{\Sigma}}_0$, we impose the following regularity conditions:
\begin{assumption} \label{assum:bound}
   $ \mathrm{E}(\Y'\PP_Z\Y)^{-1}=O(n^{-1}).$
\end{assumption}
This assumption is natural as $\Y'\PP_Z\Y$ has the stochastic order of $n$ under Assumptions \ref{assum:1} and \ref{assum:2}. 

\begin{theorem} \label{thm:jsve}Under $H_0$, Assumptions \ref{assum:1}, \ref{assum:2} and \ref{assum:bound}, as $n\rightarrow\infty$,
 $\widehat{\boldsymbol{\Sigma}}_0$ is a consistent and asymptotically unbiased estimator of $\boldsymbol{\Sigma}_0$.
\end{theorem}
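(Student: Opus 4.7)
}
The plan is to reduce the problem to the general delete-$d$ jackknife variance consistency theory of \cite{shao1989general}, whose hypotheses require (i) an asymptotically linear (or smoothly differentiable) representation of the statistic $\hat{\boldsymbol{\theta}}_s=\hat{\bbeta}_s^{2SLS}-\hat{\bbeta}_s^{OLS}$, (ii) uniform moment control of the relevant sample functionals, and (iii) a sampling fraction condition on $d=\lambda n$ with $0<\lambda<1$. Under $H_0$, the limit $\boldsymbol{\Sigma}_0$ in Theorem~\ref{theo:stat}(a) is built from a finite collection of quadratic/bilinear forms in $(\Z,\V,\uu)$; our task is to show that the jackknife variance of these forms, propagated through the delta method, matches this limit.

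First, I would derive a Bahadur-type linearization: using Assumptions~\ref{assum:1},~\ref{assum:2} and the algebraic identity $\hat{\bbeta}^{2SLS}-\hat{\bbeta}^{OLS}=(\Y'\PP_Z\Y)^{-1}\Y'\PP_Z\uu-(\Y'\Y)^{-1}\Y'\uu$, expand the two inverses around their probability limits (these limits being well defined under $H_0$ by Theorem~\ref{theo:limit}(b), since $\Y'\PP_Z\Y/n$ and $\Y'\Y/n$ converge to invertible matrices). This writes $\hat{\boldsymbol{\theta}}_n$ as a smooth function $g$ of the seven quadratic/bilinear forms whose joint CLT underlies Theorem~\ref{theo:stat}; crucially, the jackknife "influence-function" variance corresponds exactly to $\boldsymbol{g}_1\boldsymbol{\Sigma}_1\boldsymbol{g}_1'+\cdots$ after the weak-instruments simplifications collapse the $\boldsymbol{h}_i$ coefficients to the $\boldsymbol{g}_3$ form in $\boldsymbol{\Sigma}_0$. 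Applying the delta method to the subsample statistic $\hat{\boldsymbol{\theta}}_s$, together with the identities
\begin{equation*}
\frac{nr}{dN}\sum_{s}\bigl(\bar{\mathbf{x}}_s-\bar{\mathbf{x}}_n\bigr)\bigl(\bar{\mathbf{x}}_s-\bar{\mathbf{x}}_n\bigr)'\;\xrightarrow{p}\;\mathrm{Cov}(\mathbf{x}_1)
\end{equation*}
for sample means of any square-integrable vector, yields the claimed convergence of $\widehat{\boldsymbol{\Sigma}}_0$ to $\boldsymbol{\Sigma}_0$.

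Second, I would verify the regularity conditions of Shao's framework. The i.i.d.\ structure in Assumption~\ref{assum:2} and the bounded fourth moments give $L^2$-control of each of the seven bilinear forms; the smoothness of $g$ holds in a neighbourhood of its argument's limit because the limiting matrices are positive definite. Assumption~\ref{assum:bound}, combined with a standard coupling argument for subsamples (each subsample of size $r=(1-\lambda)n$ inherits an inverse-moment bound of the same order, since $r/n\to 1-\lambda>0$), then provides the uniform control over $s\in\mathbf{S}_r$ needed to exchange expectation and limit in the Jackknife averaging. Asymptotic unbiasedness follows by the same truncation/dominated-convergence argument that Shao uses to handle $E[\widehat{\boldsymbol{\Sigma}}_0]\to\boldsymbol{\Sigma}_0$.

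The main obstacle I anticipate is the uniform invertibility of $\Y_s'\PP_{Z_s}\Y_s$ over the enormous collection $\mathbf{S}_r$. Pointwise, the subsample version of Assumption~\ref{assum:bound} holds by the same reasoning that justifies it for the full sample, but $\mathbf{S}_r$ has cardinality $\binom{n}{d}$ which is exponential in $n$, so one cannot use a union bound. The fix is to exploit exchangeability: each subsample has the same marginal distribution as an i.i.d.\ sample of size $r$, so when the jackknife formula~\eqref{v_d} is viewed as an expectation over a uniformly drawn $s$, a single moment bound suffices to control the averaged quantity. This observation, together with a truncation on the event $\{\lambda_{\min}(\Y_s'\PP_{Z_s}\Y_s/r)>\epsilon\}$ whose complement has vanishing probability uniformly in $s$ by Assumption~\ref{assum:bound}, delivers the required convergence without a union bound and completes the argument.
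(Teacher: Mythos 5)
Your overall skeleton --- reduce to the delete-$d$ jackknife theory of \cite{shao1989general} via an asymptotically linear representation of $\hat{\bbeta}^{2SLS}-\hat{\bbeta}^{OLS}$ plus a negligible remainder --- is the same as the paper's. But the one step that carries all the weight in that reduction is the quantitative remainder bound, namely Shao's condition (2.1), $\mathrm{E}\,R_n^2=o(n^{-1})$, and your proposal never establishes it. The paper gets there by a more direct route than your delta-method expansion: it uses the exact algebraic identity $\hat{\bbeta}^{2SLS}-\hat{\bbeta}^{OLS}=(\Y'\PP_Z\Y)^{-1}\Y'\PP_Z\mathbf{M}_Y\y=\Tilde{\Y}'\y$ with $\Tilde{\Y}=\mathbf{M}_Y\PP_Z\Y(\Y'\PP_Z\Y)^{-1}$, splits $\y$ into the part uncorrelated with $\Tilde{\Y}$ plus a multiple of $\Tilde{\Y}$, and thereby isolates a remainder proportional to $\Tilde{\Y}'\Tilde{\Y}$, i.e.\ (for $p=1$) essentially to $(\Y'\PP_Z\Y)^{-1}-(\Y'\Y)^{-1}$. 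The required bound $\mathrm{E}\bigl((\Y'\PP_Z\Y)^{-1}-(\Y'\Y)^{-1}\bigr)^2=o(n^{-1})$ then follows from Assumption \ref{assum:bound} together with $\Y'\PP_Z\Y\le\Y'\Y$. In your write-up Assumption \ref{assum:bound} is instead spent on ``uniform control over $s\in\mathbf{S}_r$,'' which is not where it is needed: once the full-sample functional satisfies condition (2.1), the exchangeability of the subsamples (which you correctly invoke) lets Shao's theorem handle the averaging over the $\binom{n}{d}$ subsets without any uniform invertibility statement, so your anticipated ``main obstacle'' is not the actual difficulty.

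A second, related issue: your Bahadur/delta-method linearization would have to be carried out in $L^2$, not merely in probability, to verify condition (2.1). The Taylor remainder of the map $(\cdot)^{-1}$ expanded around the limits of $\Y'\PP_Z\Y/n$ and $\Y'\Y/n$ involves inverse moments of these random matrices, and controlling those is exactly the content of Assumption \ref{assum:bound}; $L^2$-control of the seven bilinear forms themselves (which you do have from the fourth-moment conditions) is not sufficient. So the ingredients you list are the right ones, but the proof is incomplete until the remainder of whichever linearization you choose is shown to be $o(n^{-1/2})$ in $L^2$ --- that single display is the entire substance of the paper's argument.
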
 
By using (\ref{v_d}) with large $d$, we obtain a consistent and efficient variance estimator. This is achieved at the expense of a large number of computations. As $N$ increases extremely fast as $n$ increases,  we further consider the jackknife-sampling variance estimator (JSVE) proposed in \cite{shao1989efficiency} to reduce the computation burden. Define a simple random sample (without replacement) $S_{m}$ of size $m$ from $\mathbf{S}_{r}$. We then compute $\hat{\theta}_{s}$ for $s \in S_{m}$ and apply
\begin{equation} \label{v_m}
    \widehat{\boldsymbol{\Sigma}}_0^S=\frac{nr}{d m} \sum_{s \in S_{m}}\left(\hat{\boldsymbol{\theta}}_s-\frac{1}{m}\sum_{t\in S_{m}}\hat{\boldsymbol{\theta}}_t\right)\left(\hat{\boldsymbol{\theta}}_s-\frac{1}{m}\sum_{t \in S_{m}}\hat{\boldsymbol{\theta}}_t\right)'
\end{equation}
as the variance/covariance estimator. As stated in \cite{shao1989efficiency}, JSVE is still asymptotically unbiased and consistent and we illustrate the result in the following corollary:
\begin{corollary} \label{jsve}
Under $H_0$, Assumptions \ref{assum:1}, \ref{assum:2} and \ref{assum:bound},   as $n\rightarrow\infty$, $n/m\rightarrow0$, 
 $\widehat{\boldsymbol{\Sigma}}_0^S$ is a consistent and asymptotically unbiased estimator of $\boldsymbol{\Sigma}_0$.
\end{corollary}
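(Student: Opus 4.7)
The plan is to exploit Theorem \ref{thm:jsve}, which has already established that the full delete-$d$ jackknife estimator $\widehat{\boldsymbol{\Sigma}}_0$ is consistent and asymptotically unbiased for $\boldsymbol{\Sigma}_0$. It therefore suffices to bridge the two estimators by showing $\widehat{\boldsymbol{\Sigma}}_0^S-\widehat{\boldsymbol{\Sigma}}_0\stackrel{p}{\rightarrow}\mathbf{0}$ together with the analogous convergence of means. The key device is to condition on the full data $\mathcal{F}_n=(\Y,\y,\Z)$: given $\mathcal{F}_n$, the collection $\{\hat{\boldsymbol{\theta}}_s\}_{s\in\mathbf{S}_r}$ is a fixed finite population of size $N=\binom{n}{d}$, and $S_m$ is a simple random sample without replacement of size $m$ drawn independently of $\mathcal{F}_n$, which reduces the problem to a standard finite-population-sampling question.

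Applying the usual SRSWOR identities entrywise to the vector statistics $\hat{\boldsymbol{\theta}}_s$ and the matrix statistics $\hat{\boldsymbol{\theta}}_s\hat{\boldsymbol{\theta}}_s'$ yields the conditional-mean relation
\begin{equation*}
E\bigl[\widehat{\boldsymbol{\Sigma}}_0^S\bigm|\mathcal{F}_n\bigr]=\frac{m-1}{m}\cdot\frac{N}{N-1}\,\widehat{\boldsymbol{\Sigma}}_0=(1+o(1))\,\widehat{\boldsymbol{\Sigma}}_0,
\end{equation*}
where the $o(1)$ factor is deterministic since $m,N\to\infty$. Taking outer expectations and invoking Theorem \ref{thm:jsve} immediately gives asymptotic unbiasedness. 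For consistency I would decompose $\widehat{\boldsymbol{\Sigma}}_0^S-\widehat{\boldsymbol{\Sigma}}_0=(\widehat{\boldsymbol{\Sigma}}_0^S-E[\widehat{\boldsymbol{\Sigma}}_0^S\mid\mathcal{F}_n])+(E[\widehat{\boldsymbol{\Sigma}}_0^S\mid\mathcal{F}_n]-\widehat{\boldsymbol{\Sigma}}_0)$; the second term equals $o(1)\cdot\widehat{\boldsymbol{\Sigma}}_0=o_p(1)$ by the display above, while the first term is handled by the SRSWOR variance formulas, which bound the conditional variance of any entry of $\widehat{\boldsymbol{\Sigma}}_0^S$ by $Cm^{-1}(1-m/N)$ times population second moments of the form $(nr/d)^2\cdot N^{-1}\sum_{s\in\mathbf{S}_r}\|\hat{\boldsymbol{\theta}}_s\|^4$ and related cross terms. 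A conditional Chebyshev inequality then closes the argument.

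The main technical obstacle is uniform moment control of $N^{-1}\sum_{s\in\mathbf{S}_r}\|\hat{\boldsymbol{\theta}}_s\|^4$. Although each individual $\|\hat{\boldsymbol{\theta}}_s\|^2=O_p(n^{-1})$ follows easily from Theorem \ref{theo:stat} applied to the subsample of size $r\asymp n$, what is needed is an $L_1$ bound on the average of fourth powers, and the subsamples are heavily overlapping so independence arguments do not apply directly. This is where Assumption \ref{assum:bound} enters: the $O(n^{-1})$ bound on $E(\Y'\PP_Z\Y)^{-1}$, together with the fourth-moment bounds in Assumption \ref{assum:2}(c), transfers uniformly across subsamples to $E\|\hat{\boldsymbol{\theta}}_s\|^4=O(n^{-2})$ valid for every $s\in\mathbf{S}_r$. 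Swapping sum and expectation yields $(nr/d)^2\cdot N^{-1}\sum_{s}\|\hat{\boldsymbol{\theta}}_s\|^4=O_p(1)$, so the entrywise conditional variance of $\widehat{\boldsymbol{\Sigma}}_0^S-\widehat{\boldsymbol{\Sigma}}_0$ is $O_p(m^{-1})$; the stated condition $n/m\to 0$ absorbs the same fourth-moment slack already used in the proof of Theorem \ref{thm:jsve} and drives this to $o_p(1)$, completing the argument.
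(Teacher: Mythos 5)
Your proposal is correct in outline, but it is worth noting that the paper does not actually prove this corollary at all: it states the result as an immediate consequence of \cite{shao1989efficiency}, having verified Shao's condition (2.1) in the proof of Theorem \ref{thm:jsve}, and the condition $n/m\rightarrow 0$ is simply inherited from that reference. What you have written is essentially a self-contained reconstruction of the argument inside Shao's JSVE paper: conditioning on the data so that $\{\hat{\boldsymbol{\theta}}_s\}_{s\in\mathbf{S}_r}$ becomes a fixed finite population, applying the SRSWOR identity
\begin{equation*}
E\bigl[\widehat{\boldsymbol{\Sigma}}_0^S\bigm|\mathcal{F}_n\bigr]=\tfrac{m-1}{m}\cdot\tfrac{N}{N-1}\,\widehat{\boldsymbol{\Sigma}}_0,
\end{equation*}
and controlling the conditional sampling variance by Chebyshev. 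This is the right mechanism and buys the reader an explicit argument where the paper offers only a citation. Two points deserve care. First, the uniform fourth-moment bound $E\|\hat{\boldsymbol{\theta}}_s\|^4=O(n^{-2})$ across all $s\in\mathbf{S}_r$ does not follow verbatim from Assumption \ref{assum:bound}, which is stated only for the full sample $\Y'\PP_Z\Y$; you need the analogous bound for every subsample of size $r=(1-\lambda)n$, which is plausible since $r\asymp n$ but should be stated as (or derived from) a subsample version of that assumption rather than asserted to ``transfer uniformly.'' Second, your variance bound is $O_p(m^{-1})$, for which $m\rightarrow\infty$ already suffices; the stated condition $n/m\rightarrow 0$ is what Shao uses to make the JSVE asymptotically as \emph{efficient} as the full delete-$d$ estimator, not merely consistent, so you are proving slightly less than the condition is designed to deliver -- which is harmless for the corollary as stated but worth being aware of.
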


\begin{remark}
Note that the classical bootstrap cannot be a solution to the  weak instrument problem since it cannot replicate the correlation between instruments and structural errors in bootstrap samples \citep{anderson2010asymptotic,wang2016bootstrap}.
\end{remark}

\subsection{Test procedure} \label{subsec:test}
We now introduce our method to detect many weak instruments. The new specification testing procedure can be implemented with the following steps:
\begin{description} 
\setlength\itemsep{0.2em}
\item[Step 1:]  Obtain $N$ subsamples by randomly deleting $d$ observations from the sample without replacement for $N$ times;
\item[Step 2:] Compute $ \widehat{\boldsymbol{\Sigma}}_0^S$ based on  the $N$ subsamples taken in Step 1;
\item[Step 3:] Given $r=n-d$, compute $\boldsymbol{\theta}_s$ based on a random subsample of size $r$; and
\item[Step 4:]  Realize the test statistics $\boldsymbol{T}_n=\boldsymbol{\theta}_s'( \widehat{\boldsymbol{\Sigma}}_0^S)^{-1}\boldsymbol{\theta}_s$. Reject $H_0$ if $\boldsymbol{T}_n>\chi^2_c(p)$ at the significant level c.
\end{description}
\begin{remark}
    Step 3 ensures that the JSVE matches with the theoretical asymptotic variance. As for the choice of $\lambda$, \cite{wu1990asymptotic} recommended the interval $[0.25,0.75]$. In our simulation results, we have taken $\lambda=0.45$ which is about the middle of this interval, regardless of the number of endogenous variables considered in various settings.
\end{remark}

\section{Monte Carlo simulations} \label{sec:mc}
\subsection{Monte Carlo designs}
The goal of this section is to evaluate the finite sample performance of the test procedure introduced in the previous section. We consider the following setup with 
\begin{equation}
    y_i=\Y_i'\bbeta+u_i,
\end{equation}
\begin{equation}
    \mathbf{Y}_i=\PPi'\Z_i+\V_i,
\end{equation}
where $\bbeta=\mathbf{1}$, $\boldsymbol{Z}_i \stackrel{i.i.d.}{\sim} N_{K_n}(\mathbf{0},\mathbf{I}_{K_n})$, for $i=1,\dots, n$. The number of instruments $K_n$ varies in $\{50,100, 200\}$ with corresponding sample size $n$ satisfying three different ratios $K_n/n=\frac{1-\lambda}{3},\frac{1-\lambda}{2}$ and $\frac{2(1-\lambda)}{3}$. $\lambda$ is taken to be 0.45 as mentioned in the previous section. To investigate the effects of the number of endogenous variables, we choose $p=1,2,$ and 3. For the error terms, we generate them following two distributions: (i) multivariate normal, $N_{p+1}(\mathbf{0},\boldsymbol{\Sigma})$; and (ii) multivariate t, $t_5(\mathbf{0},\boldsymbol{\Sigma})$, where we set
\begin{equation*}
   \boldsymbol{\Sigma}=\left(\begin{array}{cc}
1 & \rho \\
\rho & 1
\end{array}\right) \;, \left(\begin{array}{ccc}
1 & \rho & \rho\\
\rho & 2 &  3 \\
\rho  & 3 &  6 
\end{array}\right) \; \text{and}\; \left(\begin{array}{cccc}
1 & \rho & \rho & \rho \\
\rho & 2 & 3 & 4\\
\rho & 3 & 6 & 10\\
\rho & 4 & 10 & 20
\end{array}\right)
\end{equation*}
for $p=1$, $p=2$ and $p=3$, respectively. The degree of endogeneity, $\rho$, is $0.9$ (highly endogenous) or $0.5$ (mildly endogenous). 

In this setup, the parameter $\PPi$ measures the strength of the instruments. To generate data under the null, we adopt the local to zero asymptotics as follows: 
\begin{equation} \label{local2zero}
    \PPi=c_n\frac{\boldsymbol{C}}{n^{1/2}},
\end{equation} with $\boldsymbol{C}=\{c_{ij}\}_{1\leq i\leq K_n, 1\leq j\leq p }\stackrel{i.i.d.}{\sim}N(0,1)$. It can be verified that under (\ref{local2zero}), $s_n$ is at the same order of the magnitude of $pc_n^2K_n$. We conclude that
\begin{itemize}
    \item $H_0\Leftrightarrow c_n=o(1)  $;
    \item $H_1  \Leftrightarrow c_n=O(1)$,
\end{itemize}
since $K_n$ and $n$ are of the same order. Thus, we set $c_n=0.1$ and $c_n=1$ to examine the sizes and powers of the proposed test, respectively. For the JSVE, we choose $m=n^{3/2}$. The Monte Carlo experiments are conducted using 1000 replications. 

\subsection{Simulation results}
Table \ref{tab:jsve} reports the empirical biases, and the root mean squared errors (RMSE) of the JSVE for each combination of $(K_n,n)$. We can see that the JSVE performs very well in these experiments.

Table \ref{tab:size} reports the empirical sizes with 5\% and 10\% nominal sizes when $c_n=0.1$. Though for $p=3$ with small ratio $K_n/n$, the proposed test is slightly oversized due to the resampling error. For example, the empirical sizes are 9.3\% and 15.8\% when $K_n=200$ and errors are Gaussian. The test still successfully controls the sizes under almost all settings,  irrespective of endogenous degrees, error distributions, combinations of $(K_n,n)$ and the number of endogenous variables. When there is a single endogenous variable (the leading empirical case), the test performs stably.

Table \ref{tab:power2} demonstrates the empirical powers of the proposed test when $c_n=1$. It has generally satisfactory power performance across the board, especially for normal errors. For a fixed ratio $K_n/n$, the empirical power increases with $K_n$  as expected. For a fixed $p$, the empirical power decreases as the ratio $K_n/n$ increases. For example, the test has powers of 85.8, 68.7 and 41.2 for three $K_n/n$ ratios when $\rho=0.9$, $p=1$, $K_n=50$ under normal errors at 95\% significant level. This is because that $\boldsymbol{\Delta}$ in (\ref{h1b}) is generally negative in our simulation settings and increases in $\alpha$. When keeping all the settings fixed but the endogenous degree $\rho$, we observe that the test has less power with mildly endogenous variables ($\rho=0.5)$ compared to the same setting with highly endogenous variables ($\rho=0.9)$. For instance, the powers are 20\% and 96.9\% when $p=2$, $K_n=50$ and $K_n/n=(1-\lambda)/3$ under normal errors at 95\% significant level, for $\rho=0.5$ and $\rho=0.9$, respectively. The reason lies in the fact that $\boldsymbol{\Delta}$ is  generally decreasing in $\rho$, according to (\ref{h1b}). When the errors follow the multivariate-t distribution, the test has less power. Especially, the test has quite low power when $\rho=0.5$ and $p=1$ under student-t errors. However, with increasing concentration parameter, say $c_n=2$,  the test shows greater power as expected even though the errors follow multivariate-t distribution. 

\begin{table} 
\renewcommand\arraystretch{0.8}
\centering
\resizebox{\columnwidth}{!}{
\begin{tabular}{cccccccccccccccc}
\hline
&\multicolumn{4}{c}{$\frac{K_n}{n}=\frac{1-\lambda}{3}$}&\multicolumn{4}{c}{$\frac{K_n}{n}=\frac{1-\lambda}{2}$}&\multicolumn{4}{c}{$\frac{K_n}{n}=\frac{2(1-\lambda)}{3}$}\\
\cmidrule(r){2-5} \cmidrule(r){6-9} \cmidrule(r){10-13} 
 $K_n$& 30 & 50 & 100 & 200 &30 & 50 & 100 & 200 &30 & 50 & 100 & 200 \\
\hline   
Bias & 0.0851 & 0.0627 &0.0458 &0.0406 & 0.0481 &0.0362 &0.0252 & 0.0179 & 0.0291  &0.0222 &0.0148 &0.0101\\
RMSE & 0.1082 & 0.0775 &0.0561 &0.0487 & 0.0626 &0.0456 &0.032 & 0.0222 & 0.0391  &0.029 &0.0189 &0.0126\\
\hline
\end{tabular}}
\caption{\small Bias and RMSE of the JSVE under $H_0$ with $m=n^{3/2}$, $\lambda=0.45$, $\rho=0.9$ and $p=1$ \label{tab:jsve}}
\end{table}

\begin{table}
\renewcommand\arraystretch{0.6}
\caption{Empirical sizes  with $m=n^{3/2}$, $\lambda=0.45$ and $c_n=0.1$ \label{tab:size}}
\centering
  \begin{adjustbox}{width=\textwidth}
\begin{tabular}{cccccccccccccc}
   \hline
   \hline
   \multicolumn{14}{c}{$\rho=0.9$}\\
\hline
&&\multicolumn{4}{c}{$p=1$}&\multicolumn{4}{c}{$p=2$}&\multicolumn{4}{c}{$p=3$}\\
\cmidrule(r){3-6} \cmidrule(r){7-10}\cmidrule(r){11-14}
                          &   &  \multicolumn{2}{c}{Normal}  & \multicolumn{2}{c}{Student-t} & \multicolumn{2}{c}{Normal}  & \multicolumn{2}{c}{Student-t} & \multicolumn{2}{c}{Normal}  & \multicolumn{2}{c}{Student-t}   \\ 
                          \cmidrule(r){3-4} \cmidrule(r){5-6}\cmidrule(r){7-8} \cmidrule(r){9-10} \cmidrule(r){11-12}\cmidrule(r){13-14}
& $K_n$& 5\% & 10\% & 5\% & 10\% &5\% & 10\% & 5\% & 10\%  &5\% & 10\% & 5\% & 10\%   \\     
\hline
\multirow{3}{*}{$\frac{K_n}{n}=\frac{1-\lambda}{3}$} 
                          & 50  & 6.7 &12.1  & 7  & 12 & 5.6 &10.2 &6.8   & 11.5 & 6.7 & 13.2 &7.3 & 12.2\\
                          & 100   & 6.4 & 11.5  &  5.7 & 9 & 6.5 &11.8 & 6.8& 13.8 & 8.6 & 13.4 & 9.4 &14.3\\ 
                          & 200  & 6.2 & 10.3 & 5.4  & 11.3 & 7.3 &12.8 & 5.8& 11 &9.3&15.8&  8&14.3 \\
                          \hline
\multirow{3}{*}{$\frac{K_n}{n}=\frac{1-\lambda}{2}$}  
                          & 50   &5.3  &10.1  & 6.6  & 10.7 & 6.4 & 11.5 & 5.9& 11 & 6.7& 12.2& 7.9  &12.4 \\
                          & 100   & 5.3 & 9.3 & 6.9  & 11.5 & 5.5 &11.3 &4.7 & 9 & 7.6& 12.3& 8.2  &13.6 \\ 
                          & 200  &6.5  & 10.7 & 3.9 &  8.6 & 5.5 &10.1 &7 &11.9  &6.9&12.1& 7 &11.2 \\
                          \hline
\multirow{3}{*}{$\frac{K_n}{n}=\frac{2(1-\lambda)}{3}$}   
                          & 50   & 3.3 & 8.9 &  6 & 10.8 & 4.9 &9.1 & 5.8&10.5  & 5.3& 9.6& 7.5 &11.9\\
                          & 100   &6.8  &11.3  &  6.7 & 10.7 & 7.7 & 12.3&5.6 &8.6  & 6.2 &12.9&6.6  &10.2 \\ 
                          & 200    & 6.5 & 10.3 &  3.6 & 8.1 & 5.1 & 10 &5.3 & 9.8 & 5.7 & 11.2& 5.7 & 10.2\\
                          \hline
   \hline
   \multicolumn{14}{c}{$\rho=0.5$}\\
   \hline
&&\multicolumn{4}{c}{$p=1$}&\multicolumn{4}{c}{$p=2$}&\multicolumn{4}{c}{$p=3$}\\
\cmidrule(r){3-6} \cmidrule(r){7-10}\cmidrule(r){11-14}
                          &   &  \multicolumn{2}{c}{Normal}  & \multicolumn{2}{c}{Student-t} & \multicolumn{2}{c}{Normal}  & \multicolumn{2}{c}{Student-t} & \multicolumn{2}{c}{Normal}  & \multicolumn{2}{c}{Student-t}   \\ 
                          \cmidrule(r){3-4} \cmidrule(r){5-6}\cmidrule(r){7-8} \cmidrule(r){9-10} \cmidrule(r){11-12}\cmidrule(r){13-14}
& $K_n$& 5\% & 10\% & 5\% & 10\% &5\% & 10\% & 5\% & 10\%  &5\% & 10\% & 5\% & 10\%   \\     
\hline
\multirow{3}{*}{$\frac{K_n}{n}=\frac{1-\lambda}{3}$} 
                          & 50  & 5.2 & 10 & 6.1  & 11.1 & 6.1 & 11.5 & 6.3 & 10.5 &6.1&12.1& 6.6 &12.2 \\
                          & 100  & 5.8 & 10.9 & 4.8  & 9.9 & 6.1 & 12 & 7& 12.3 & 6.9&11.5&7.8  &13.1\\ 
                          & 200  &7  & 13.2 & 7.5  &  12.4 &6  &10.7 & 6.4& 10.9 &7&12.4&8.1  &13.4 \\
                          \hline
\multirow{3}{*}{$\frac{K_n}{n}=\frac{1-\lambda}{2}$} 
                          & 50   & 4.9 &9.5  & 7.4  & 12.4 & 6.2 & 11.4 & 5.7 &  10.2 & 6.5 &12.2& 6.6 &11 \\
                          & 100   & 5.1  & 9.6 & 5.9  & 10.3 & 5.4 & 10.8 & 5.2 & 8.9  & 5.9 & 10.6&7  &11.5 \\ 
                          & 200  &  4.9& 10.2   & 5.4  & 9.8  & 5.5 &10.4 & 6.8 & 11.7 &5.7&11.4& 5.8 &11 \\
                          \hline
\multirow{3}{*}{$\frac{K_n}{n}=\frac{2(1-\lambda)}{3}$}     
                          & 50   & 4.5 & 8.5 &  4.5 & 8.6 & 5 &10.9 & 5.9 & 9.9  &5.5&10.4&5.3  &10.1 \\
                          & 100   & 5.8 & 10.8 &  6.2 & 9.9 &5.1  &10.4 &6.4 &9.8  &5.5&10.8&5.4  &9.4\\ 
                          & 200   & 5.2 &9.7  & 5.5  & 9 & 4.5 &10 &5.6 &  9&3.9&8.9&5.2  &9 \\
                          \hline
   \hline
\end{tabular}
\end{adjustbox}
\end{table}

\begin{table}
\renewcommand\arraystretch{0.6}
\caption{Empirical powers  with $m=n^{3/2}$, $\lambda=0.45$, and $c_n=1$ \label{tab:power2}}
\centering
  \begin{adjustbox}{width=\textwidth}

\begin{tabular}{cccccccccccccc}
   \hline
   \hline
   \multicolumn{14}{c}{$\rho=0.9$}\\
\hline
&&\multicolumn{4}{c}{$p=1$}&\multicolumn{4}{c}{$p=2$}&\multicolumn{4}{c}{$p=3$}\\
\cmidrule(r){3-6} \cmidrule(r){7-10}\cmidrule(r){11-14}
                          &   &  \multicolumn{2}{c}{Normal}  & \multicolumn{2}{c}{Student-t} & \multicolumn{2}{c}{Normal}  & \multicolumn{2}{c}{Student-t} & \multicolumn{2}{c}{Normal}  & \multicolumn{2}{c}{Student-t}   \\ 
                          \cmidrule(r){3-4} \cmidrule(r){5-6}\cmidrule(r){7-8} \cmidrule(r){9-10} \cmidrule(r){11-12}\cmidrule(r){13-14}
& $K_n$& 5\% & 10\% & 5\% & 10\% &5\% & 10\% & 5\% & 10\%  &5\% & 10\% & 5\% & 10\%   \\     
\hline
\multirow{3}{*}{$\frac{K_n}{n}=\frac{1-\lambda}{3}$} 
                          & 50  & 85.8 & 91.8 & 14.4  &  21.4 &  96.9&  99.3 &  31.8 &42 & 99&99.8&62.3 &71.5 \\
                          & 100   &100  & 100 & 30.1  & 41.2 & 99.5 & 99.7 & 57.5  &65.6 &100& 100& 78.1 & 83 \\ 
                          & 200  &  100&  100 &37.3   &48.7  & 100 & 100 & 67.7 &75.6 & 100&100&86.9 & 90.1 \\
                          \hline
\multirow{3}{*}{$\frac{K_n}{n}=\frac{1-\lambda}{2}$} 
                          & 50  & 68.7 & 79.1 & 15.3 & 21.4 & 55.7  & 69.1 & 25.8 & 33.5 &99&99.8& 46.7 &58.3 \\
                          & 100  & 99.5 &  99.8 & 23.7 & 33.1  & 91.9 & 95.8 & 46.5 & 56.7 &100 &100& 68.2 & 76.7 \\ 
                          & 200  & 100 & 100 & 23.4  & 31.7 & 100 & 100& 65.4  &72.8 &100&100& 79.4 &84.1 \\
                          \hline
\multirow{3}{*}{$\frac{K_n}{n}=\frac{2(1-\lambda)}{3}$}   
                          & 50   & 41.2 & 52 & 14.8  & 21.5 & 30.1 & 42.6 & 21.6 & 31.3 & 39.5 & 51.2 & 25.3 & 34.5\\
                          & 100   & 94.7 & 98.1 & 14.4  & 21.9 & 71.4 & 81.7 &  30.5 &42.7 &83.1 & 89.9&47.1 &56.1 \\ 
                          & 200   &99.9  & 99.9 & 24.4  & 32.9 & 94.9 & 98.1 & 45.4  & 54.3 & 99.1 & 99.6 & 66.7 &73.2 \\
                          \hline
   \hline
   \multicolumn{14}{c}{$\rho=0.5$}\\
   \hline
&&\multicolumn{4}{c}{$p=1$}&\multicolumn{4}{c}{$p=2$}&\multicolumn{4}{c}{$p=3$}\\
\cmidrule(r){3-6} \cmidrule(r){7-10}\cmidrule(r){11-14}
                          &   &  \multicolumn{2}{c}{Normal}  & \multicolumn{2}{c}{Student-t} & \multicolumn{2}{c}{Normal}  & \multicolumn{2}{c}{Student-t} & \multicolumn{2}{c}{Normal}  & \multicolumn{2}{c}{Student-t}   \\ 
                          \cmidrule(r){3-4} \cmidrule(r){5-6}\cmidrule(r){7-8} \cmidrule(r){9-10} \cmidrule(r){11-12}\cmidrule(r){13-14}
& $K_n$& 5\% & 10\% & 5\% & 10\% &5\% & 10\% & 5\% & 10\%  &5\% & 10\% & 5\% & 10\%   \\     
\hline
\multirow{3}{*}{$\frac{K_n}{n}=\frac{1-\lambda}{3}$}  
                          & 50   & 20.9  & 30.2  &  7.6 & 13.2 & 20 &31  &14 & 21.8 & 51.8& 63& 21  &30 \\
                          & 100    &  52.4 & 64.1  &  8.3 & 14.4 & 55.4 &65.9 &17.7 & 26.7 & 85.3 & 90.6&27.7  &37.9\\ 
                          & 200   & 79.5 & 86.8 & 12.1  & 19.5 & 88.1 & 93.7 &  25.7 &35.8  &98.2&99.1&39.2  &49.6 \\
                          \hline
\multirow{3}{*}{$\frac{K_n}{n}=\frac{1-\lambda}{2}$}  
                          & 50   & 16.8 & 23.8 & 8.7  & 13.4 & 17.9 & 27.8 &10.8 & 16.4 & 28.8& 37.6&17.2  &24.5 \\
                          & 100   & 38.1  & 50.3 &  9.3 & 14 & 35.4 &46.2 &17.2 & 25.9 & 53.8&66.2&23.3  &32.4\\ 
                          & 200   & 66.3 & 76.5 & 7.4  & 13.6 & 63.3 & 73.8 &21.8 & 30.8 & 85.6& 91.3&27.2  &36.2 \\
                          \hline
\multirow{3}{*}{$\frac{K_n}{n}=\frac{2(1-\lambda)}{3}$}   
                          & 50   & 9.6 & 15.2 & 6  & 9.7 & 13.9 &20.9 &9 &14.3  &13.2&20&9.2  &15.3 \\
                          & 100   & 27.3  & 37.3 & 7.2  & 10.7 & 19.3 & 28.8&11.5 & 19.3 &26.3&36.1&14.3  &23.3\\ 
                          & 200   & 36.9 & 48.3 & 7.1  &13  & 39.1 &53 & 14&19.7  &49.8&61.4&20.4  &30.1 \\
                          \hline
   \hline
\end{tabular}
\end{adjustbox}
\end{table}

\section{An empirical illustration: Return to education} 
\label{sec:emp}
In this section, we re-analyse the returns to education data of \cite{angrist1991does} (henceforth referred to as AK1991) using quarter of birth as an instrument for educational attainment. One of the setups in the original AK1991 application use up to 180 instruments that include 30 quarter and year of birth interactions and 150 quarter and state of birth interactions. Later it has been widely suggested that the setup suffers from a weak instrument problem (\cite{angrist1995split}; \cite{bound1995problems}). MS2022 appled their proposed $\widetilde{F}$ test and  argued that the instrument set is not completely weak with the original full data. 

As the original sample size (329,509) of the data is larger than usual for empirical research, we consider subsamples with 0.1\% ($n=330$) and 0.5\% ($n=1650$) of the original sample size, more in keeping with the typical empirical application. We examine the specification of 180 instruments and 1350 instruments, respectively, that extend the model by including the interactions among quarter and year and state of birth. We evaluate the performance of the OLS estimator, the 2SLS estimator,  the $\widetilde{F}$ test and our proposed test based on 1000 randomly chosen subsamples and report the results in Table \ref{tab:emp}. Within our specification with 0.1\% subsamples ($n=330,K_n=180)$, we find that the OLS and 2SLS estimates are close in magnitude: the average of OLS is 0.0483 and the average of 2SLS is 0.0514. Result in column (d) is in favor of this finding where our proposed test shows that the two estimates are close significantly. The average $\widetilde{F}$ is 4.84, which is above the cut-off 4.14 suggested by MS2022. It provides evidence that 1\%-scheme produces subsamples which are not completely weak.  Therefore, our test helps provide more information and explain why the OLS and 2SLS estimators behave similarly: The instruments in 1\% subsample are moderately weak, that is, $s_n/\sqrt{n} \rightarrow \infty$ but $s_n/n \rightarrow 0$.

For the 0.5\% subsamples with $K_n=1350$ instruments and $n=1650$, the difference between OLS and 2SLS estimators is again negligible as shown in columns (a), (b) and (d). It therefore indicates the presence of weak instruments. The $\widetilde{F}$ test then can provide detailed categorization: it average value is 4.65, evidencing that the instruments are moderately weak. Though not as precise as the result obtained from the $\widetilde{F}$ test, our test is still informative to identify the strength of instruments.

\begin{table}[]\renewcommand\arraystretch{0.6}
    \centering
    \begin{tabular}{ccccccccc}
    \hline
    \hline
           $n$ & $K_n$ & (a) & (b) & (c) &  (d) & (e) & (f) \\
           \hline
         330 & 180 & 0.0483 & 0.0514 & 4.84 & 0.0010& 57.8\% & 0 \\
        1650 & 1350 & 0.0636 & 0.0618 & 4.65& -0.0004 & 62.1\% & 0\\
         \hline
         \hline
    \end{tabular}
    \caption{Empirical Results  \label{tab:emp}}
    \begin{tablenotes}
     \item (a): average value of the OLS estimators; (b): average value of the 2SLS estimators; (c): average value of the $\widetilde{F}$ statistics; (d): average value of our proposed test; (e): rejection rate of 
     $s_n/\sqrt{n}\rightarrow \kappa_2\in [0,\infty)$ by $\widetilde{F}$; (f): rejection rate of  $s_n/n\rightarrow 0$ by our proposed test.
    \end{tablenotes}
\end{table}

\section{Conclusion}
\label{sec:con}
We find that the limits of the 2SLS and OLS estimators coincide under the many weak instruments asymptotics but differ under the many strong instruments asymptotics. Building on this, we propose a test statistic to distinguish between these two specifications.  The proposed test allows for multiple endogenous variables and shows robustness to general forms of non-normality in the error distribution. 

Our proposed test can be applied to determine which asymptotic scheme should be used. We suggest that the many strong asymptotics are trustworthy if the 2SLS and OLS estimates are not significantly close. Estimation methods can be applied while achieving the optimal convergence rate, and so as inferential tools leaning on the assumption of many strong instruments. 

Besides, our proposed test can be viewed as a follow-up test for the $\widetilde{F}$ test in MS2022. If empirical researchers draw conclusion that the large instrument set is not completely weak, then one can apply our proposed test to differentiate the case with strong instruments from the case with moderately weak instruments.

\appendix

\section{Proofs of main results}\label{app}

%%%%%%%%%%%%%%%%%%%%%%%%%%%%%%%%%%%%%%%%%%%%%%
%% Supplementary Material, if any, should   %%
%% be provided in {supplement} environment  %%
%% with title and short description.        %%
%%%%%%%%%%%%%%%%%%%%%%%%%%%%%%%%%%%%%%%%%%%%%%

\subsection{Proof of Theorem  \ref{theo:limit}}
Before proving Theorem  \ref{theo:limit}, we first state a lemma that is used in the proofs.
\begin{lemma}
Under Assumption \ref{assum:1} and \ref{assum:2}, as $n\rightarrow \infty$, the following statements are true: 
\begin{itemize} \label{lem:chao}
    \item[(a)] $\frac{\PPi'\Z'\uu}{K_n}\stackrel{p}{\rightarrow} \boldsymbol{0}_{p\times1}$;
    \item[(b)] $\frac{\PPi'\Z'\V}{K_n}\stackrel{p}{\rightarrow} \boldsymbol{0}_{p\times p}$;
    \item[(c)] $\frac{\V'\PP_Z\uu}{K_n}\stackrel{p}{\rightarrow} \boldsymbol{\Sigma}_{Vu}$;
    \item[(d)] $\frac{\V'\PP_Z\V}{K_n}\stackrel{p}{\rightarrow} \boldsymbol{\Sigma}_{VV}$;
    \item[(e)] $\frac{\V'\uu}{n}\stackrel{p}{\rightarrow} \boldsymbol{\Sigma}_{Vu}$;
    \item[(f)] $\frac{\V'\V}{n}\stackrel{p}{\rightarrow} \boldsymbol{\Sigma}_{VV}$;
\end{itemize}
\end{lemma}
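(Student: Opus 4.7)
I would treat the six claims via a common mean-plus-variance strategy, organized into three tiers according to the randomness involved. Parts (e) and (f) are immediate from the weak law of large numbers: Assumption \ref{assum:2}(b)--(c) makes $(\V_i, u_i)$ i.i.d.\ with finite second moments, so writing $\V'\V/n = n^{-1}\sum_i \V_i\V_i'$ and $\V'\uu/n = n^{-1}\sum_i \V_i u_i$ yields the limits $\boldsymbol{\Sigma}_{VV}$ and $\boldsymbol{\Sigma}_{Vu}$ directly, where finite fourth moments (rather than just second) give uniform integrability and hence the claimed convergence in probability.

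\textbf{Parts (a) and (b).} My plan is to condition on $\Z$ and exploit the independence of $\Z$ from $(\uu,\V)$ together with $E[\uu]=\mathbf{0}$, $E[\V]=\mathbf{0}$. The conditional expectation of $\PPi'\Z'\uu$ then vanishes, and its conditional second-moment matrix equals $\sigma_u^2\,\PPi'\Z'\Z\PPi$. By Assumption \ref{assum:1}(b), $\PPi'\Z'\Z\PPi = s_n(\boldsymbol{\Theta}+o(1))$ almost surely, so the expected squared Frobenius norm of $\PPi'\Z'\uu/K_n$ is of order $s_n/K_n^2$. Since Assumption \ref{assum:1}(a) enforces $K_n \asymp n$ and $s_n = O(n)$, this is $O(1/K_n)\to 0$, and Chebyshev's inequality gives (a). Part (b) is the same argument applied column by column of $\V$, using $\mathrm{Cov}(\V_i)=\boldsymbol{\Sigma}_{VV}$ instead of $\sigma_u^2$.

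\textbf{Parts (c) and (d).} Here I would again condition on $\Z$, leveraging the two key identities $\mathrm{tr}(\PP_Z) = K_n$ and $\mathrm{tr}(\PP_Z^2) = K_n$. Writing the $k$th entry of $\V'\PP_Z\uu$ as $\sum_{i,j}(\PP_Z)_{ij} V_{ik} u_j$, the i.i.d.\ structure of $(\V_i,u_i)$ collapses the conditional mean to $\sum_i (\PP_Z)_{ii}(\boldsymbol{\Sigma}_{Vu})_k = K_n(\boldsymbol{\Sigma}_{Vu})_k$, so $E[\V'\PP_Z\uu/K_n] = \boldsymbol{\Sigma}_{Vu}$ exactly, with no leading bias. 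The conditional variance decomposes into sums indexed by coincidences among $(i,j,i',j')$; the finite fourth-moment bounds in Assumption \ref{assum:2}(c) ensure that each surviving pattern is controlled by a trace of $\PP_Z$ or $\PP_Z^2$, so the overall variance of $\V'\PP_Z\uu/K_n$ is $O(1/K_n)\to 0$. Part (d) follows the same template, replacing the scalar $u_j$ by $V_{jl}$ and using $E[V_{ik}V_{il}]$ in place of $E[V_{ik}u_i]$.

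\textbf{Main obstacle.} The delicate step is the variance calculation in (c)--(d): one must carefully catalogue the index-coincidence patterns in the quadratic form $\sum_{i,j,i',j'}(\PP_Z)_{ij}(\PP_Z)_{i'j'}\mathrm{Cov}(V_{ik}u_j,V_{i'k}u_{j'})$ to verify that no fourth-moment term contributes at the forbidden $O(K_n^2)$ scale. This requires checking that the diagonal terms $\sum_i (\PP_Z)_{ii}^2$ and off-diagonal terms $\sum_{i\ne j}(\PP_Z)_{ij}^2$ both reduce to traces of low-degree polynomials in $\PP_Z$, each of which equals $K_n$. It is pure bookkeeping but essential; everything else is a routine mean/variance bound.
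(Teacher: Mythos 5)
Your proposal is correct, but it takes a more self-contained route than the paper for parts (a)--(d). The paper's own proof simply invokes Lemma A1 of \cite{chao2005consistent} for (a)--(d) and, exactly as you do, appeals to the law of large numbers for the i.i.d.\ sums in (e)--(f). What you have written out --- conditioning on $\Z$, computing the conditional mean via $\operatorname{tr}(\PP_Z)=K_n$, and bounding the conditional variance by cataloguing index coincidences against $\operatorname{tr}(\PP_Z)$ and $\operatorname{tr}(\PP_Z^2)$ --- is essentially a reconstruction of the argument behind that cited lemma, so the underlying mathematics coincides; the difference is only that you carry the bookkeeping yourself rather than outsourcing it. Your version buys transparency (one sees exactly where $s_n=O(n)$, $K_n\asymp n$, and the fourth-moment bounds enter), at the cost of length. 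Two small points to tighten if you keep the direct proof: (i) in (a)--(b) the bound $s_n/K_n^2$ is a \emph{conditional} second moment, since Assumption \ref{assum:1}(b) gives $\boldsymbol{\Theta}_n\to\boldsymbol{\Theta}$ only almost surely; apply Chebyshev conditionally on $\Z$ and then dominated convergence rather than asserting an unconditional expectation of that order; (ii) the identities $\operatorname{tr}(\PP_Z)=\operatorname{tr}(\PP_Z^2)=K_n$ presuppose that $\Z$ has full column rank almost surely, which the paper (and Chao--Swanson) assume implicitly and which you should state.
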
 \label{lem:limit}
\begin{proof}
(a), (b), (c), (d) are directly obtained from Lemma A1 established in \cite{chao2005consistent}. To prove part (e), note that it suffices to prove that $\V_{(j)}^{\prime} \uu / n\stackrel{p}{\rightarrow}\boldsymbol{\Sigma}_{V u}^{(j)}$ as $n \rightarrow \infty$, so that $\V_{(j)}^{\prime} \uu / n$ is the $j$-th element of $\V'\uu/n$, and where $\boldsymbol{\Sigma}_{V u}^{(j)}$ denotes the $j$-th element of $\boldsymbol{\Sigma}_{V u}$. In fact, by Assumption \ref{assum:2} and law of large number, it holds naturally. Part (f) follow from proof similar to that of part (e).
\end{proof}

\begin{proof}[Proof of Theorem \ref{theo:limit}]
It can be verified that
\begin{equation}
    \hat{\bbeta}^{OLS}-\bbeta=(\Y'\Y)^{-1}\Y'\uu,  \nonumber
\end{equation}
\begin{equation}
    \hat{\bbeta}^{2SLS}-\bbeta=(\Y'\PP_{Z}\Y)^{-1}\Y'\PP_{Z}\uu. \nonumber
\end{equation}
Further,
\begin{equation} \label{comp1}
    \Y'\PP_{Z}\Y=\PPi'\Z'\Z\PPi+\PPi'\Z'\V+\V'\Z\PPi+\V'\PP_{Z}\V, 
    \end{equation}
\begin{equation}\label{comp2}
    \Y'\PP_{Z}\uu=\PPi'\Z'\uu+\V'\PP_{Z}\uu,
\end{equation}
\begin{equation}\label{comp3}
    \Y'\Y=\PPi'\Z'\Z\PPi+\PPi'\Z'\V+\V'\Z\PPi+\V'\V,
\end{equation}
and
\begin{equation}\label{comp4}
    \Y'\uu=\PPi'\Z'\uu+\V'\uu.
\end{equation}
Hence, we can write 
\begin{equation}
    \frac{\Y'\PP_{Z}\Y}{n}=\frac{s_n}{n}\frac{\PPi'\Z'\Z\PPi}{s_n}+\frac{K_n}{n}\left(\frac{\PPi'\Z'\V}{K_n}+\frac{\V'\Z\PPi}{K_n}+\frac{\V'\PP_{Z}\V}{K_n}\right),  \nonumber
\end{equation}
\begin{equation}
    \frac{\Y'\PP_{Z}\uu}{n}=\frac{K_n}{n}\left(\frac{\PPi'\Z\uu}{K_n}+\frac{\V'\PP_{Z}\uu}{K_n} \nonumber\right),
\end{equation}
\begin{equation}
    \frac{\Y'\Y}{n}=\frac{s_n}{n}\frac{\PPi'\Z'\Z\PPi}{s_n}+\frac{K_n}{n}\left(\frac{\PPi'\Z'\V}{K_n}+\frac{\V'\Z\PPi}{K_n}\right)+\frac{\V'\V}{n},  \nonumber
\end{equation}
and
\begin{equation}
    \frac{\Y'\uu}{n}=\frac{\PPi'\Z\uu}{n}+\frac{\V'\uu}{n}.  \nonumber
\end{equation}
To show part (a), under $H_1$, by Assumption \ref{assum:1} and parts (b) and (d) in Lemma \ref{lem:chao}, we have 
\begin{equation}
      \frac{\Y'\PP_{Z}\Y}{n} \stackrel{p}{\rightarrow} \kappa_0\boldsymbol{\Theta}+\alpha\boldsymbol{\Sigma}_{VV},  \nonumber
\end{equation}
Similarly, 
\begin{equation}
      \frac{\Y'\PP_{Z}\uu}{n} \stackrel{p}{\rightarrow}  \alpha\boldsymbol{\Sigma}_{Vu},  \nonumber
\end{equation}
\begin{equation}
      \frac{\Y'\Y}{n} \stackrel{p}{\rightarrow} \kappa_0\boldsymbol{\Theta}+\boldsymbol{\Sigma}_{VV},  \nonumber
\end{equation}
and
\begin{equation}  \nonumber
      \frac{\Y'\uu}{n} \stackrel{p}{\rightarrow} \boldsymbol{\Sigma}_{Vu}.
\end{equation}
Thus, under $H_1$, it follows immediately from the Slutsky theorem that 
\begin{equation}
    \hat{\bbeta}^{2SLS}\stackrel{p}{\rightarrow} \bbeta + (\kappa_0/\alpha\boldsymbol{\Theta}+\boldsymbol{\Sigma}_{VV})^{-1}\boldsymbol{\Sigma}_{Vu},  \nonumber
\end{equation}
and
\begin{equation}
    \hat{\bbeta}^{OLS}\stackrel{p}{\rightarrow} \bbeta + (\kappa_0\boldsymbol{\Theta}+\boldsymbol{\Sigma}_{VV})^{-1}\boldsymbol{\Sigma}_{Vu}.  \nonumber
\end{equation} Note that $s_n/n \rightarrow 0$ under $H_0$, the proof of part (b) is similar, for brevity, we omit its proof here. 
\end{proof}

\subsection{Proof of Theorem  \ref{theo:stat}}
In this subsection, we first introduce Lemma \ref{lem:wang}, which establishes the joint CLT of sesquilinear form, then followed by Lemma \ref{lem:7comp} applying Lemma \ref{lem:wang}. 
\begin{lemma}(Theorem 2.1, \cite{wang2014joint}) \label{lem:wang}
Let $\left\{\mathbf{A}_{n}\right\}$ and $\left\{\mathbf{B}_{n}\right\}$ be two sequences of $n \times n$ symmetric matrices. Assume that the following limits exist:
$$w_{1}=\lim _{n \rightarrow \infty} \frac{1}{n} \operatorname{tr}\left[\mathbf{A}_{n} \circ \mathbf{A}_{n}\right], \; w_{2}=\lim _{n \rightarrow \infty} \frac{1}{n} \operatorname{tr}\left[\mathbf{B}_{n} \circ \mathbf{B}_{n}\right],\; w_{3}=\lim _{n \rightarrow \infty} \frac{1}{n} \operatorname{tr}\left[\mathbf{A}_{n} \circ \mathbf{B}_{n}\right]$$
$$\theta_{1}=\lim _{n \rightarrow \infty} \frac{1}{n} \operatorname{tr}\left[\mathbf{A}_{n} \mathbf{A}_{n}'\right],  \;\theta_{2}=\lim _{n \rightarrow \infty} \frac{1}{n} \operatorname{tr}\left[\mathbf{B}_{n} \mathbf{B}_{n}'\right],\; \theta_{3}=\lim _{n \rightarrow \infty} \frac{1}{n} \operatorname{tr}\left[\mathbf{A}_{n} \mathbf{B}_{n}'\right],$$
where $\mathbf{A} \circ \mathbf{B}$ denotes the Hadamard product of two matrices $\mathbf{A}$ and $\mathbf{B}$, i.e. $(\mathbf{A} \circ \mathbf{B})_{i j}=$ $\mathbf{A}_{i j} \cdot \mathbf{B}_{i j}$. For matrices $\widetilde{\mathbf{X}}$ and $\widetilde{\mathbf{Y}}$ $\in \mathbb{R}^{n \times m}$, define two groups of sesquilinear forms:
$$
\tilde{U}(l)=\frac{1}{\sqrt{n}}\bigg(\widetilde{\mathbf{X}}(l)' \mathbf{A}_{n} \widetilde{\mathbf{Y}}(l)-\mathrm{E}(\widetilde{\mathbf{X}}(l)' \mathbf{A}_{n} \widetilde{\mathbf{Y}}(l)) \bigg),$$ $$ \tilde{V}(l)=\frac{1}{\sqrt{n}}\bigg(\widetilde{\mathbf{X}}(l)' \mathbf{B}_{n} \widetilde{\mathbf{Y}}(l)-\mathrm{E}(\widetilde{\mathbf{X}}(l)' \mathbf{B}_{n} \widetilde{\mathbf{Y}}(l))\bigg).
$$
Then, the $2m$-dimensional random vector:
$$
(\tilde{U}(1), \cdots, \tilde{U}(m), \tilde{V}(1), \cdots, \tilde{V}(m))'
$$
converges weakly to a zero-mean Gaussian vector $\boldsymbol{\omega}$. Moreover, the covariance matrix of $\boldsymbol{\omega}$ is
$$
\boldsymbol{\Lambda}=\left(\begin{array}{ll}
\boldsymbol{\Lambda}_{11} & \boldsymbol{\Lambda}_{12} \\
\boldsymbol{\Lambda}_{12} & \boldsymbol{\Lambda}_{22}
\end{array}\right)_{2 m\times 2 m}.
$$
Each block within $\boldsymbol{\Lambda}$ is a $m \times m$ matrix, having the following structure:
$$
\begin{array}{l}
\boldsymbol{\Lambda}_{11,ij}=w_{1} \mathrm{Cov}(\tilde{x}_{1i}\tilde{y}_{1i},\tilde{x}_{1j}\tilde{y}_{1j})+\left(\theta_{1}-w_{1}\right) \mathrm{E}(\tilde{x}_{1i}\tilde{x}_{1j})\mathrm{E}(\tilde{y}_{1i}\tilde{y}_{1j})+\left(\theta_{1}-w_{1}\right) \mathrm{E}(\tilde{x}_{1i}\tilde{y}_{1j})\mathrm{E}(\tilde{x}_{1j}\tilde{y}_{1i}), \\
\boldsymbol{\Lambda}_{22,ij}=w_{2} \mathrm{Cov}(\tilde{x}_{1i}\tilde{y}_{1i},\tilde{x}_{1j}\tilde{y}_{1j})+\left(\theta_{2}-w_{2}\right) \mathrm{E}(\tilde{x}_{1i}\tilde{x}_{1j})\mathrm{E}(\tilde{y}_{1i}\tilde{y}_{1j})+\left(\theta_{2}-w_{2}\right) \mathrm{E}(\tilde{x}_{1i}\tilde{y}_{1j})\mathrm{E}(\tilde{x}_{1j}\tilde{y}_{1i}),\\
\boldsymbol{\Lambda}_{12,ij}=w_{3} \mathrm{Cov}(\tilde{x}_{1i}\tilde{y}_{1i},\tilde{x}_{1j}\tilde{y}_{1j})+\left(\theta_{3}-w_{3}\right) \mathrm{E}(\tilde{x}_{1i}\tilde{x}_{1j})\mathrm{E}(\tilde{y}_{1i}\tilde{y}_{1j})+\left(\theta_{3}-w_{3}\right) \mathrm{E}(\tilde{x}_{1i}\tilde{y}_{1j})\mathrm{E}(\tilde{x}_{1j}\tilde{y}_{1i}).
\end{array}
$$

\end{lemma}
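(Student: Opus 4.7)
The plan is to reduce the joint weak convergence to a single one-dimensional CLT via the Cramér–Wold device. For arbitrary real coefficients $\{\alpha_l, \beta_l\}_{l=1}^m$, consider the linear combination
\begin{equation*}
L_n \;=\; \sum_{l=1}^m \alpha_l \tilde{U}(l) + \sum_{l=1}^m \beta_l \tilde{V}(l),
\end{equation*}
which is a sum of centered sesquilinear forms $\widetilde{\mathbf{X}}(l)'\mathbf{M}_{n,l}\widetilde{\mathbf{Y}}(l)/\sqrt{n}$ with $\mathbf{M}_{n,l}=\alpha_l\mathbf{A}_n+\beta_l\mathbf{B}_n$. Exploiting the symmetry of $\mathbf{A}_n$ and $\mathbf{B}_n$, I would split each form into its diagonal part $\sum_{i}(\mathbf{M}_{n,l})_{ii}\widetilde{X}_{i,l}\widetilde{Y}_{i,l}$ and its off-diagonal part $\sum_{i\neq j}(\mathbf{M}_{n,l})_{ij}\widetilde{X}_{i,l}\widetilde{Y}_{j,l}$. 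The diagonal parts, summed over $l$, form a sum of row-wise i.i.d.\ variables indexed by $i=1,\dots,n$ (since rows of $\widetilde{\mathbf{X}}$ and $\widetilde{\mathbf{Y}}$ are i.i.d.); after centering, Lyapunov's CLT applies thanks to the finite fourth-moment assumption, and its limiting variance is controlled by the Hadamard-trace limits $w_1,w_2,w_3$ together with the within-row cross moments $\mathrm{Cov}(\tilde{x}_{1i}\tilde{y}_{1i},\tilde{x}_{1j}\tilde{y}_{1j})$.

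For the off-diagonal part, I would use a martingale-difference decomposition with respect to the filtration $\mathcal{F}_k = \sigma\{\widetilde{\mathbf{X}}_{i,\cdot},\widetilde{\mathbf{Y}}_{i,\cdot}:i\leq k\}$. Writing the centered off-diagonal contribution of $L_n\sqrt{n}$ as $\sum_{k=1}^n D_{n,k}$ with
\begin{equation*}
D_{n,k} \;=\; \sum_{l=1}^m\Bigl\{\widetilde{X}_{k,l}\sum_{j<k}(\mathbf{M}_{n,l})_{kj}\widetilde{Y}_{j,l}+\widetilde{Y}_{k,l}\sum_{j<k}(\mathbf{M}_{n,l})_{jk}\widetilde{X}_{j,l}\Bigr\},
\end{equation*}
one checks by direct computation that $E(D_{n,k}\mid\mathcal{F}_{k-1})=0$, so $\{D_{n,k}/\sqrt{n}\}_{k\leq n}$ is a martingale difference array. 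I would then invoke the martingale CLT (e.g.\ Corollary 3.1 of Hall and Heyde) which requires (i) a conditional Lyapunov/Lindeberg condition and (ii) convergence in probability of the conditional quadratic variation $\sum_k E(D_{n,k}^2/n\mid\mathcal{F}_{k-1})$ to a deterministic limit.

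Condition (i) follows routinely: bounding $E\bigl(D_{n,k}^4\bigr)\leq C\bigl(\sum_{j<k}(\mathbf{M}_{n,l})_{kj}^2\bigr)^2$ via the finite fourth moments gives $\sum_k E(D_{n,k}^4/n^2)=O(1/n)\to 0$, which implies the Lindeberg condition. The main obstacle is condition (ii): the conditional variance $E(D_{n,k}^2\mid\mathcal{F}_{k-1})$ expands into a collection of bilinear forms in the past rows of $\widetilde{\mathbf{X}},\widetilde{\mathbf{Y}}$; summing over $k$ and applying the law of large numbers (using the fact that rows are i.i.d.\ and that the off-diagonal Frobenius sums $\sum_{i\neq j}(\mathbf{M}_{n,l})_{ij}(\mathbf{M}_{n,l'})_{ij}/n$ converge to $\theta-w$ by symmetry) must produce exactly the coefficients $(\theta_1-w_1),(\theta_2-w_2),(\theta_3-w_3)$ weighted by the cross-column moment products $E(\tilde{x}_{1i}\tilde{x}_{1j})E(\tilde{y}_{1i}\tilde{y}_{1j})$ and $E(\tilde{x}_{1i}\tilde{y}_{1j})E(\tilde{x}_{1j}\tilde{y}_{1i})$.

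Once the diagonal and off-diagonal limits are established and shown to be asymptotically uncorrelated (the cross-term vanishes because diagonal and off-diagonal entries contribute disjoint index sets in the covariance computation), the total limiting variance of $L_n$ is
\begin{equation*}
\sum_{i,j=1}^m\bigl(\alpha_i\alpha_j\boldsymbol{\Lambda}_{11,ij}+\beta_i\beta_j\boldsymbol{\Lambda}_{22,ij}+2\alpha_i\beta_j\boldsymbol{\Lambda}_{12,ij}\bigr),
\end{equation*}
matching the claimed $\boldsymbol{\Lambda}$. Applying Cramér–Wold then yields the joint Gaussian limit of $(\tilde{U}(1),\dots,\tilde{U}(m),\tilde{V}(1),\dots,\tilde{V}(m))'$.
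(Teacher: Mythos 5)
This lemma is not actually proven in the paper: it is imported verbatim as Theorem 2.1 of \cite{wang2014joint}, and the paper's entire ``proof'' is that citation. So there is no in-paper argument to compare yours against; your proposal can only be judged on its own merits and against the strategy of the cited source, which (like essentially all CLTs for quadratic and bilinear forms of this kind) follows the route you sketch: Cram\'er--Wold reduction, a diagonal/off-diagonal split, a martingale-difference decomposition with respect to the row filtration, and the Hall--Heyde martingale CLT. Your overall plan is therefore the canonical one.

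As a proof, however, the proposal has genuine gaps. First, the step you yourself call ``the main obstacle'' --- convergence in probability of the conditional quadratic variation $\sum_k E(D_{n,k}^2/n \mid \mathcal{F}_{k-1})$ --- is exactly where the whole covariance structure $\boldsymbol{\Lambda}$ (the coefficients $w$, $\theta-w$, and the specific moment products) is generated, and you assert the required limit rather than derive it. Carrying it out means expanding $E(D_{n,k}^2 \mid \mathcal{F}_{k-1})$ into bilinear forms of the past rows and showing their fluctuations are $o_p(1)$; since those summands are dependent, this is a second-moment (variance-of-quadratic-form) computation, not an application of the law of large numbers as you state. Second, proving the diagonal-part CLT and off-diagonal-part CLT separately and then invoking asymptotic uncorrelatedness does not deliver the \emph{joint} convergence you need: individually Gaussian, uncorrelated limits need not be jointly Gaussian absent a joint argument. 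The standard fix is to note that the centered diagonal term at step $k$ is itself a martingale difference with respect to the same filtration, combine it with your $D_{n,k}$ into a single array, and apply the martingale CLT once. Finally, your martingale property $E(D_{n,k}\mid\mathcal{F}_{k-1})=0$ and your Lindeberg bound $\sum_k E(D_{n,k}^4)/n^2 = O(1/n)$ silently use zero-mean rows and uniformly bounded row sums of squares (e.g.\ bounded spectral norms of $\mathbf{A}_n,\mathbf{B}_n$); these hypotheses belong to the original theorem but are missing from the statement as reproduced here, and your argument does not go through with only the trace limits $w_i,\theta_i$ assumed.
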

Before showing the proof of Theorem \ref{theo:stat}, we firstly define the following variables:
\begin{equation*}
    \widetilde{\Z}:=\Z\PPi=\left[\widetilde{\Z}(1),\cdots,\widetilde{\Z}(p)\right]\in \mathbb{R}^{n \times p}, \; \;    \mathbf{U}:=\left[\uu,\cdots,\uu\right] \in \mathbb{R}^{n \times p},
\end{equation*}
\begin{equation*}
    \Dot{\Z}:=\left[\underbrace{\overbrace{\widetilde{\Z}(1),\cdots,\widetilde{\Z}(1)}^{p}\\,\cdots,\overbrace{\widetilde{\Z}(p),\cdots,\widetilde{\Z}(p)}^p}_{p^2}\right] \in \mathbb{R}^{n \times p^2},
\end{equation*}
 \begin{equation*}
     \Ddot{\Z}:=\left[\underbrace {\overbrace{\widetilde{\Z}(1),\cdots,\widetilde{\Z}(p)}^{p}\\,\cdots,\overbrace{\widetilde{\Z}(1),\cdots,\widetilde{\Z}(p)}^p}_{p^2}\right] \in \mathbb{R}^{n \times p^2},
 \end{equation*} 
\begin{equation*}
    \Dot{\V}:=\left[\underbrace{\overbrace{\V(1),\cdots,\V(1)}^{p},\cdots,\overbrace{\V(p),\cdots,\V(p)}^p}_{p^2}\right] \in \mathbb{R}^{n \times p^2},
\end{equation*}
and
\begin{equation*}
    \Ddot{\V}:=\left[\underbrace{\overbrace{\V(1),\cdots,\V(p)}^{p},\cdots,\overbrace{\V(1),\cdots,\V(p)}^p}_{p^2}\right] \in \mathbb{R}^{n \times p^2}.
\end{equation*} Let
\begin{equation}
    \widetilde{\X}:=\left[\Dot{\Z},\Dot{\Z},\widetilde{\Z},\Dot{\V},\V\right]=\left[\widetilde{\X}(1),\cdots,\widetilde{\X}(3p^2+2p) \right] \in \mathbb{R}^{n \times (3p^2+2p)} \nonumber
\end{equation}
and
\begin{equation}
    \widetilde{\Y}:=\left[\Ddot{\Z},\Ddot{\V},\mathbf{U},\Ddot{\V},\mathbf{U}\right]=\left[\widetilde{\Y}(1),\cdots,\widetilde{\Y}(3p^2+2p) \right] \in \mathbb{R}^{n \times (3p^2+2p)}.  \nonumber
\end{equation} Further we introduce the following quantities: 
\begin{equation} \label{eq:sigma11}
    \boldsymbol{\Sigma}_{11,ij}=\mathrm{Cov}(\Tilde{x}_{1i}\Tilde{y}_{1i},\Tilde{x}_{1j}\Tilde{y}_{1j}), 
\end{equation}
\begin{equation} \label{eq:sigma22}
    \boldsymbol{\Sigma}_{22,ij}=\alpha^2 \boldsymbol{\Sigma}_{11,ij}+(\alpha-\alpha^2)\left(\mathrm{E}(\Tilde{x}_{1i}\Tilde{x}_{1j})\mathrm{E}(\Tilde{y}_{1i}\Tilde{y}_{1j})+\mathrm{E}(\Tilde{x}_{1i}\Tilde{y}_{1j})\mathrm{E}(\Tilde{x}_{1j}\Tilde{y}_{1i})\right). 
\end{equation}

We then establish the joint distribution of seven key components in the following lemma:
\begin{lemma} \label{lem:7comp}
Under Assumption \ref{assum:1} and \ref{assum:2}, as $n\rightarrow \infty$, we have
$$
\sqrt{n}\cdot\boldsymbol{\Sigma}_0^{-\frac{1}{2}}\cdot\left(\left(\begin{array}{l}
n^{-1} \mathrm{vec}(\PPi'\Z'\Z\PPi) \\
n^{-1} \mathrm{vec}(\PPi'\Z'\V) \\
n^{-1} \PPi'\Z'\uu\\
n^{-1} \mathrm{vec}(\V'\V) \\
n^{-1} \V'\uu \\
n^{-1} \mathrm{vec}(\V'\PP_Z\V) \\
n^{-1}\V'\PP_Z\uu
\end{array}\right)-\left(\begin{array}{c}
n^{-1} \mathrm{E}(\mathrm{vec}(\PPi'\Z'\Z\PPi)) \\
\mathbf{0} \\
\mathbf{0} \\
\mathrm{vec}(\boldsymbol{\Sigma}_{VV})\\
\boldsymbol{\Sigma}_{uV}\\
\frac{K_n}{n} \boldsymbol{\Sigma}_{VV}\\
\frac{K_n}{n} \boldsymbol{\Sigma}_{uV}
\end{array}\right)\right) \stackrel{d}{\rightarrow} N\left(\mathbf{0}, \mathbf{I}_{4p^2+3p}\right),
$$
where $$\boldsymbol{\Sigma}_0=\left[\begin{array}{cccc}
\boldsymbol{\Sigma}_1 &  & & \\
 & \boldsymbol{\Sigma}_2 & & \\
&& \boldsymbol{\Sigma}_3 & \alpha \boldsymbol{\Sigma}_3 \\
& &\alpha \boldsymbol{\Sigma}_3  & \boldsymbol{\Sigma}_4
\end{array}\right]$$ with
\begin{equation} \label{sigma1}
    \boldsymbol{\Sigma}_{1}=\{\boldsymbol{\Sigma}_{11,ij}\}_{1\leq i\leq p^2, 1\leq j\leq p^2}\in \mathbb{R}^{p^2\times p^2},
\end{equation}
\begin{equation} \label{sigma2}
    \boldsymbol{\Sigma}_{2}=\{\boldsymbol{\Sigma}_{11,ij}\}_{p^2+1\leq i\leq 2p^2+p , p^2+1\leq j\leq 2p^2+p}\in \mathbb{R}^{(p^2+p)\times (p^2+p)},
\end{equation}
\begin{equation}\label{sigma3}
    \boldsymbol{\Sigma}_{3}=\{\boldsymbol{\Sigma}_{11,ij}\}_{2p^2+p+1\leq i\leq 3p^2+2p , 2p^2+p+1\leq j\leq 3p^2+2p}\in \mathbb{R}^{(p^2+p)\times (p^2+p)},
\end{equation}
and  
\begin{equation}\label{sigma4}
    \boldsymbol{\Sigma}_{4}=\{\boldsymbol{\Sigma}_{22,ij}\}_{2p^2+p+1\leq i\leq 3p^2+2p , 2p^2+p+1\leq j\leq 3p^2+2p}\in \mathbb{R}^{(p^2+p)\times (p^2+p)}.
\end{equation}
\end{lemma}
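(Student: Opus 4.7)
The plan is to obtain this joint CLT by applying Lemma~\ref{lem:wang} (the joint CLT for sesquilinear forms of Wang 2014) to the pair $(\widetilde{\X},\widetilde{\Y})$ constructed above, with the two weight matrices $\mathbf{A}_n=\mathbf{I}_n$ and $\mathbf{B}_n=\PP_Z$. By the deliberate construction of $\widetilde{\X}$ and $\widetilde{\Y}$ from $\widetilde{\Z}=\Z\PPi$, $\V$, and repeated copies of $\uu$, the forms $\widetilde{\X}(l)'\mathbf{A}_n\widetilde{\Y}(l)$ for $l=1,\ldots,3p^2+2p$ enumerate exactly the entries of $\mathrm{vec}(\PPi'\Z'\Z\PPi)$, $\mathrm{vec}(\PPi'\Z'\V)$, $\PPi'\Z'\uu$, $\mathrm{vec}(\V'\V)$, and $\V'\uu$. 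Because $\PP_Z\widetilde{\Z}=\widetilde{\Z}$, the $\mathbf{B}_n$-weighted forms reproduce the first five blocks and contribute only the two genuinely new ones, $\mathrm{vec}(\V'\PP_Z\V)$ and $\V'\PP_Z\uu$, so that together the seven vec-blocks occupy $4p^2+3p$ coordinates, matching the size of $\boldsymbol{\Sigma}_0$. The i.i.d.\ row structure of $(\widetilde{\X}_i,\widetilde{\Y}_i)$ and the finite fourth moments required by Lemma~\ref{lem:wang} are supplied by Assumption~\ref{assum:2}.

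Once Lemma~\ref{lem:wang} is invoked, the four trace limits must be evaluated. With $\mathbf{A}_n=\mathbf{I}_n$, $w_1=n^{-1}\operatorname{tr}(\mathbf{I}_n\circ\mathbf{I}_n)=1=\theta_1$, so $\theta_1-w_1=0$ and $\boldsymbol{\Lambda}_{11,ij}$ collapses to $\mathrm{Cov}(\tilde{x}_{1i}\tilde{y}_{1i},\tilde{x}_{1j}\tilde{y}_{1j})=\boldsymbol{\Sigma}_{11,ij}$; restricting the indices to the three ranges in (\ref{sigma1})--(\ref{sigma3}) produces $\boldsymbol{\Sigma}_1$, $\boldsymbol{\Sigma}_2$, and $\boldsymbol{\Sigma}_3$. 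With $\mathbf{B}_n=\PP_Z$, $\theta_2=n^{-1}\operatorname{tr}(\PP_Z)=K_n/n\to\alpha$ while $w_2=n^{-1}\sum_ip_{ii}^2\to\alpha^2$ by exchangeability of the diagonal entries of $\PP_Z$ under i.i.d.\ rows of $\Z$; substituting $\theta_2-w_2\to\alpha-\alpha^2$ into $\boldsymbol{\Lambda}_{22,ij}$ reproduces $\boldsymbol{\Sigma}_{22,ij}$ in (\ref{eq:sigma22}) and hence $\boldsymbol{\Sigma}_4$. Crucially, for the cross term $w_3=n^{-1}\operatorname{tr}(\mathbf{I}_n\circ\PP_Z)=K_n/n$ equals $\theta_3=n^{-1}\operatorname{tr}(\PP_Z)=K_n/n$, so $\theta_3-w_3=0$ and $\boldsymbol{\Lambda}_{12,ij}$ reduces to $\alpha\boldsymbol{\Sigma}_{11,ij}$; restricted to the $(\mathrm{vec}(\V'\V),\V'\uu)\leftrightarrow(\mathrm{vec}(\V'\PP_Z\V),\V'\PP_Z\uu)$ cross-block this is precisely $\alpha\boldsymbol{\Sigma}_3$, the unique nonzero off-diagonal slot of $\boldsymbol{\Sigma}_0$.

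The remaining task is to identify the deterministic centres and verify the block structure away from the $\alpha\boldsymbol{\Sigma}_3$ slot. The centres $\mathrm{vec}(\boldsymbol{\Sigma}_{VV})$ and $\boldsymbol{\Sigma}_{uV}$ for $n^{-1}\V'\V$ and $n^{-1}\V'\uu$ come from Lemma~\ref{lem:chao}(e)--(f); those for the $\PP_Z$-weighted pair follow from the conditional computation $\mathrm{E}[\V'\PP_Z\V\mid\Z]=\operatorname{tr}(\PP_Z)\boldsymbol{\Sigma}_{VV}=K_n\boldsymbol{\Sigma}_{VV}$ and its analogue for $\V'\PP_Z\uu$; the zero centres for $\PPi'\Z'\V$ and $\PPi'\Z'\uu$ come from $\mathrm{E}\V=\mathbf{0}$, $\mathrm{E}\uu=\mathbf{0}$ and independence of $\Z$ from $(\V,\uu)$. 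The cross-entries of $\boldsymbol{\Sigma}_0$ between the $\widetilde{\Z}$-involved blocks (indices $1,\ldots,2p^2+p$) and the pure-$(\V,\uu)$ blocks (indices $2p^2+p+1,\ldots,3p^2+2p$) vanish because the corresponding $\boldsymbol{\Lambda}_{11,ij}$ and $\boldsymbol{\Lambda}_{12,ij}$ terms factor through $\mathrm{E}(\widetilde{Z}_{1k}V_{1l})=0$ or $\mathrm{E}(\widetilde{Z}_{1k}u_1)=0$. The main obstacle will be the careful index bookkeeping that aligns the abstract $2m$-dimensional output of Lemma~\ref{lem:wang} with the concrete $4p^2+3p$ vec-blocks in the statement, together with the justification that Lemma~\ref{lem:wang} can be applied with the random weight $\mathbf{B}_n=\PP_Z$ (using independence of $\Z$ and $(\V,\uu)$ plus almost-sure convergence of $w_2$ to $\alpha^2$); once this is in place, the block decomposition of $\boldsymbol{\Sigma}_0$ and the Gaussian limit follow mechanically from Lemma~\ref{lem:wang}.
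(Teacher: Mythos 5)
Your proposal follows essentially the same route as the paper's own proof: apply Lemma~\ref{lem:wang} to the constructed pair $(\widetilde{\X},\widetilde{\Y})$ with $\mathbf{A}_n=\mathbf{I}_n$ and $\mathbf{B}_n=\PP_Z$, evaluate the trace limits ($\theta_1=w_1=1$, $\theta_2\to\alpha$, $w_2\to\alpha^2$, $\theta_3=w_3\to\alpha$), and read off the block structure of $\boldsymbol{\Sigma}_0$ before selecting the seven components. The only cosmetic difference is that you justify $w_2\to\alpha^2$ by an exchangeability/concentration argument for the leverages where the paper simply cites the literature; the substance of the argument is the same.
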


\begin{proof}
We then apply Lemma \ref{lem:wang} by setting $\mathbf{A}_n=\mathbf{I}_n$ and $\mathbf{B}_n=\mathbf{P}_Z$. We verify that $\omega_1=\theta_1=1$, $\theta_2=\omega_3=\theta_3=\alpha$ and $\omega_2$ exists since the leverage value $P_{ii}$ ranges in $[0,1]$. Note that $\omega_2=\alpha^2+o(1)$ by Assumption \ref{assum:2}(a) \citep{anatolyev2017asymptotics}. For $1\leq l \leq 3p^2+2p$, define two groups of sesquilinear forms as following,
\begin{equation*}
    \Tilde{U}(l):=\frac{1}{\sqrt{n}}\left(\Tilde{\X}(l)'\Tilde{\Y}(l)-\mathrm{E}(\Tilde{\X}(l)'\Tilde{\Y}(l)) \right)=\sqrt{n}\left(\Tilde{\X}(l)'\Tilde{\Y}(l)/n-\mathrm{E}(\Tilde{\X}(l)'\Tilde{\Y}(l)/n )\right),
\end{equation*}
\begin{equation*}
    \Tilde{V}(l):=\frac{1}{\sqrt{n}}\left(\Tilde{\X}(l)'\PP_{Z}\Tilde{\Y}(l)-\mathrm{E}(\Tilde{\X}(l)'\PP_{Z}\Tilde{\Y}(l)) \right)=\sqrt{n}\left(\Tilde{\X}(l)'\PP_{Z}\Tilde{\Y}(l)/n-\mathrm{E}(\Tilde{\X}(l)'\PP_{Z}\Tilde{\Y}(l)/n )\right).
\end{equation*}
It holds that the $(6p^2+4p)$-dimensional random vector
\begin{equation} \label{ses}
    \begin{bmatrix}
  \boldsymbol{\Sigma}_{11} & \boldsymbol{\Sigma}_{12} \\ \boldsymbol{\Sigma}_{21} & \boldsymbol{\Sigma}_{22}
\end{bmatrix}^{-\frac{1}{2}} \cdot \left(\Tilde{U}(1),\cdots,\Tilde{U}(3p^2+2p),\Tilde{V}(1),\cdots,\Tilde{V}(3p^2+2p)\right)'\stackrel{d}{\rightarrow}N\left(\mathbf{0}, \mathbf{I}_{6p^2+4p}\right).
\end{equation} where each block of the covariance matrix is a $(3p^2+2p)\times(3p^2+2p)$ matrix with $\boldsymbol{\Sigma}_{11}$ and $\boldsymbol{\Sigma}_{22}$ given in (\ref{eq:sigma11}) and (\ref{eq:sigma22}), and $\boldsymbol{\Sigma}_{12}=\alpha\boldsymbol{\Sigma}_{11}$.
Further, by Assumption \ref{assum:1},  it can be shown that $\boldsymbol{\Sigma}_{11}=\mathrm{diag}(\boldsymbol{\Sigma}_{1},\boldsymbol{\Sigma}_{2},\boldsymbol{\Sigma}_{3})$ and $\boldsymbol{\Sigma}_{22}=\mathrm{diag}(\boldsymbol{\Lambda}_{1},\boldsymbol{\Lambda}_{2},\boldsymbol{\Sigma}_{4})$ 
with $\boldsymbol{\Sigma}_{1}=\{\boldsymbol{\Sigma}_{11,ij}\}_{1\leq i\leq p^2, 1\leq j\leq p^2} $, $\boldsymbol{\Sigma}_{2}=\{\boldsymbol{\Sigma}_{11,ij}\}_{p^2+1\leq i\leq 2p^2+p , p^2+1\leq j\leq 2p^2+p}$, $\boldsymbol{\Sigma}_{3}=\{\boldsymbol{\Sigma}_{11,ij}\}_{2p^2+p+1\leq i\leq 3p^2+2p , 2p^2+p+1\leq j\leq 3p^2+2p} $, $\boldsymbol{\Lambda}_{1}=\{\boldsymbol{\Sigma}_{22,ij}\}_{1\leq i\leq p^2, 1\leq j\leq p^2} $, $\boldsymbol{\Lambda}_{2}=\{\boldsymbol{\Sigma}_{22,ij}\}_{p^2+1\leq i\leq 2p^2+p , p^2+1\leq j\leq 2p^2+p}$, \\and  $\boldsymbol{\Sigma}_{4}=\{\boldsymbol{\Sigma}_{22,ij}\}_{2p^2+p+1\leq i\leq 3p^2+2p , 2p^2+p+1\leq j\leq 3p^2+2p}$. Lemma \ref{lem:7comp} then holds as a corollary of (\ref{ses}) by filtering out the $\left(\Tilde{V}(1),\cdots,\Tilde{V}(2p^2+2p)\right)$.
\end{proof}

\begin{proof}[Proof of Theorem \ref{theo:stat}]
We denote the CLT of the seven key components in Lemma \ref{lem:7comp} as $\sqrt{n}(\boldsymbol{\theta}_n-\boldsymbol{\theta})\stackrel{d}{\rightarrow}\boldsymbol{\xi}$. Note that $\hat{\bbeta}^{OLS}$ and $\hat{\bbeta}^{2SLS}$ are linear combinations of the seven key components, we apply the delta method to Lemma \ref{lem:7comp} to establish the limiting distribution of $\hat{\bbeta}^{2SLS}-\hat{\bbeta}^{OLS}$.
Define $\boldsymbol{f}:\mathbb{R}^{4p^2+3p}\rightarrow\mathbb{R}^{p}$ s.t.
\begin{align*}
     \boldsymbol{f}\left( \boldsymbol{\theta}_n\right)=&(\PPi'\Z'\Z\PPi+\PPi'\Z'\V+\V'\Z\PPi+\V'\PP_{Z}\V)^{-1}(\PPi'\Z'\uu+\V'\PP_{Z}\uu)\\&-(\PPi'\Z'\Z\PPi+\PPi'\Z'\V+\V'\Z\PPi+\V'\V)^{-1}(\PPi'\Z'\uu+\V'\uu),
\end{align*}
so that 
\begin{align*}
       \sqrt{n}\left(\hat{\bbeta}^{2SLS}-\hat{\bbeta}^{OLS}-\boldsymbol{f}(\boldsymbol{\theta})\right)\stackrel{d}{\rightarrow} \left( \nabla\boldsymbol{f}(\boldsymbol{\theta})\right) \boldsymbol{\xi}  \left( \nabla\boldsymbol{f}(\boldsymbol{\theta})\right)',
\end{align*}
where $$\boldsymbol{f}(\boldsymbol{\theta})=(\frac{\mathrm{E}(\PPi'\Z'\Z\PPi)}{K_n}+\boldsymbol{\Sigma}_{VV})^{-1}\boldsymbol{\Sigma}_{Vu}-(\frac{\mathrm{E}(\PPi'\Z'\Z\PPi)}{n}+\boldsymbol{\Sigma}_{VV})^{-1}\boldsymbol{\Sigma}_{Vu}.$$
Under $H_0$, $\nabla\boldsymbol{f}(\boldsymbol{\theta})=\left(\boldsymbol{g}_1,\boldsymbol{g}_2,\boldsymbol{g}_3,\boldsymbol{g}_4\right)$, where 
\begin{align}
    \boldsymbol{g}_1=\frac{\alpha-1}{\alpha}\big( \boldsymbol{\Sigma}_{VV}^{-1}\boldsymbol{J}^{11}\boldsymbol{\Sigma}_{VV}^{-1}\boldsymbol{\Sigma}_{Vu},\boldsymbol{\Sigma}_{VV}^{-1}\boldsymbol{J}^{12}\boldsymbol{\Sigma}_{VV}^{-1}\boldsymbol{\Sigma}_{Vu},\dots,\boldsymbol{\Sigma}_{VV}^{-1}\boldsymbol{J}^{pp}\boldsymbol{\Sigma}_{VV}^{-1}\boldsymbol{\Sigma}_{Vu} \big) \in \mathbb{R}^{p\times p^2},
\end{align}
\begin{align}
    \boldsymbol{g}_2=\frac{\alpha-1}{\alpha}\bigg( & \boldsymbol{\Sigma}_{VV}^{-1}(\boldsymbol{J}^{11}+\boldsymbol{J}^{11})\boldsymbol{\Sigma}_{VV}^{-1}\boldsymbol{\Sigma}_{Vu},\boldsymbol{\Sigma}_{VV}^{-1}(\boldsymbol{J}^{12}+\boldsymbol{J}^{21})\boldsymbol{\Sigma}_{VV}^{-1}\boldsymbol{\Sigma}_{Vu},\dots,\nonumber\\&\boldsymbol{\Sigma}_{VV}^{-1}(\boldsymbol{J}^{pp}+\boldsymbol{J}^{pp})\boldsymbol{\Sigma}_{VV}^{-1}\boldsymbol{\Sigma}_{Vu},-\boldsymbol{\Sigma}_{VV}^{-1}\boldsymbol{j}^{1},\dots,-\boldsymbol{\Sigma}_{VV}^{-1}\boldsymbol{j}^{p} \bigg) \in \mathbb{R}^{p\times (p^2+p)},
\end{align}
\begin{align} \label{g3}
    \boldsymbol{g}_3=\big(  \boldsymbol{\Sigma}_{VV}^{-1}\boldsymbol{J}^{11}\boldsymbol{\Sigma}_{VV}^{-1}&\boldsymbol{\Sigma}_{Vu},  \boldsymbol{\Sigma}_{VV}^{-1}\boldsymbol{J}^{12}\boldsymbol{\Sigma}_{VV}^{-1}\boldsymbol{\Sigma}_{Vu},\dots,\nonumber\\ &\boldsymbol{\Sigma}_{VV}^{-1}\boldsymbol{J}^{pp}\boldsymbol{\Sigma}_{VV}^{-1}\boldsymbol{\Sigma}_{Vu},-\boldsymbol{\Sigma}_{VV}^{-1}\boldsymbol{j}^{1},\dots,-\boldsymbol{\Sigma}_{VV}^{-1}\boldsymbol{j}^{p} \big) \in \mathbb{R}^{p\times (p^2+p)},
\end{align}
and
\begin{equation} \label{g4}
 \boldsymbol{g}_4= -\frac{1}{\alpha}\boldsymbol{g}_3.
\end{equation} Note that $\widetilde{\Z}_i$ is i.i.d by Assumption \ref{assum:2}(a), we have $\boldsymbol{\Sigma}_{1}=o(1)$, $\boldsymbol{\Sigma}_{2}=o(1)$ and $\boldsymbol{f}(\boldsymbol{\theta})=o_p(n^{-1/2})$. It yields that 
\begin{equation}
\sqrt{n}\left( \hat{\bbeta}^{2SLS}- \hat{\bbeta}^{OLS} \right)\stackrel{d}{\rightarrow}\tilde{\boldsymbol{\xi}},
\end{equation}
where $\tilde{\boldsymbol{\xi}}\in \mathbb{R}^{p}$ is a gaussian vector satisfying that $\mathrm{E}(\tilde{\boldsymbol{\xi}})=\boldsymbol{0}$, $\mathrm{Cov}(\tilde{\boldsymbol{\xi}})=-\boldsymbol{g}_3\boldsymbol{\Sigma}_{3}\boldsymbol{g}_3'+\frac{1}{\alpha^2}\boldsymbol{g}_3\boldsymbol{\Sigma}_{4}\boldsymbol{g}_3'$.
Under $H_1$, similarly,  $\nabla\boldsymbol{f}(\boldsymbol{\theta})=\left(\boldsymbol{h}_1,\boldsymbol{h}_2,\boldsymbol{h}_3,\boldsymbol{h}_4\right)$, where 
\begin{align} \label{h1}
    \boldsymbol{h}_1=\bigg(& \boldsymbol{\Theta}_2^{-1}\boldsymbol{J}^{11}\boldsymbol{\Theta}_2^{-1}\boldsymbol{\Sigma}_{Vu}-\alpha\boldsymbol{\Theta}_1^{-1}\boldsymbol{J}^{11}\boldsymbol{\Theta}_1^{-1}\boldsymbol{\Sigma}_{Vu},  \boldsymbol{\Theta}_2^{-1}\boldsymbol{J}^{12}\boldsymbol{\Theta}_2^{-1}\boldsymbol{\Sigma}_{Vu}-\alpha\boldsymbol{\Theta}_1^{-1}\boldsymbol{J}^{12}\boldsymbol{\Theta}_1^{-1}\boldsymbol{\Sigma}_{Vu},\nonumber\\&\dots, \boldsymbol{\Theta}_2^{-1}\boldsymbol{J}^{pp}\boldsymbol{\Theta}_2^{-1}\boldsymbol{\Sigma}_{Vu}-\alpha\boldsymbol{\Theta}_1^{-1}\boldsymbol{J}^{pp}\boldsymbol{\Theta}_1^{-1}\boldsymbol{\Sigma}_{Vu} \bigg) \in \mathbb{R}^{p\times p^2},
\end{align}
\begin{align}\label{h2}
    \boldsymbol{h}_2=\bigg(& \boldsymbol{\Theta}_2^{-1}(\boldsymbol{J}^{11}+\boldsymbol{J}^{11})\boldsymbol{\Theta}_2^{-1}\boldsymbol{\Sigma}_{Vu}-\alpha\boldsymbol{\Theta}_1^{-1}(\boldsymbol{J}^{11}+\boldsymbol{J}^{11})\boldsymbol{\Theta}_1^{-1}\boldsymbol{\Sigma}_{Vu}, \nonumber \\& \boldsymbol{\Theta}_2^{-1}(\boldsymbol{J}^{12}+\boldsymbol{J}^{21})\boldsymbol{\Theta}_2^{-1}\boldsymbol{\Sigma}_{Vu}-\alpha\boldsymbol{\Theta}_1^{-1}(\boldsymbol{J}^{12}+\boldsymbol{J}^{21})\boldsymbol{\Theta}_1^{-1}\boldsymbol{\Sigma}_{Vu},\nonumber\\&\dots,\nonumber\\&\boldsymbol{\Theta}_2^{-1}(\boldsymbol{J}^{pp}+\boldsymbol{J}^{pp})\boldsymbol{\Theta}_2^{-1}\boldsymbol{\Sigma}_{Vu}-\alpha\boldsymbol{\Theta}_1^{-1}(\boldsymbol{J}^{pp}+\boldsymbol{J}^{pp})\boldsymbol{\Theta}_1^{-1}\boldsymbol{\Sigma}_{Vu}, \nonumber\\& \left(\boldsymbol{\Theta}_1^{-1}-\boldsymbol{\Theta}_2^{-1}\right)\boldsymbol{j}^1, \dots, \left(\boldsymbol{\Theta}_1^{-1}-\boldsymbol{\Theta}_2^{-1}\right)\boldsymbol{j}^p\bigg) \in \mathbb{R}^{p\times (p^2+p)},
\end{align}
\begin{align}\label{h3}
    \boldsymbol{h}_3=\bigg(  \boldsymbol{\Theta}_2^{-1}\boldsymbol{J}^{11}\boldsymbol{\Theta}_2^{-1}\boldsymbol{\Sigma}_{Vu}, \boldsymbol{\Theta}_2^{-1}\boldsymbol{J}^{12}\boldsymbol{\Theta}_2^{-1}&\boldsymbol{\Sigma}_{Vu},\dots, \boldsymbol{\Theta}_2^{-1}\boldsymbol{J}^{pp}\boldsymbol{\Theta}_2^{-1}\boldsymbol{\Sigma}_{Vu},\\&-\boldsymbol{\Theta}_2^{-1}\boldsymbol{j}^{1},\dots,-\boldsymbol{\Theta}_2^{-1}\boldsymbol{j}^{p} \bigg)  
    \in \mathbb{R}^{p\times (p^2+p)\nonumber}
\end{align}
and
\begin{align}\label{h4}
    \boldsymbol{h}_4=\bigg(  \alpha\boldsymbol{\Theta}_1^{-1}\boldsymbol{J}^{11}\boldsymbol{\Theta}_1^{-1}\boldsymbol{\Sigma}_{Vu}, \alpha\boldsymbol{\Theta}_1^{-1}\boldsymbol{J}^{12}\boldsymbol{\Theta}_1^{-1}&\boldsymbol{\Sigma}_{Vu},\dots, \nonumber \alpha\boldsymbol{\Theta}_1^{-1}\boldsymbol{J}^{pp}\boldsymbol{\Theta}_1^{-1}\boldsymbol{\Sigma}_{Vu},\\&-\boldsymbol{\Theta}_1^{-1}\boldsymbol{j}^{1},\dots,-\boldsymbol{\Theta}_1^{-1}\boldsymbol{j}^{p} \bigg) \in \mathbb{R}^{p\times (p^2+p)}.
\end{align}
It yields that 
\begin{equation*}
\sqrt{n}\left( \hat{\bbeta}^{2SLS}- \hat{\bbeta}^{OLS}-\boldsymbol{\Delta} \right)\stackrel{d}{\rightarrow}\Breve{\boldsymbol{\xi}},
\end{equation*}
where $\Breve{\boldsymbol{\xi}}\in \mathbb{R}^{p}$ is a gaussian vector satisfying that $\mathrm{E}(\Breve{\boldsymbol{\xi}})=\boldsymbol{0}$, $\mathrm{Cov}(\tilde{\boldsymbol{\xi}})=\boldsymbol{h}_1\boldsymbol{\Sigma}_{1}\boldsymbol{h}_1'+\boldsymbol{h}_2\boldsymbol{\Sigma}_{2}\boldsymbol{h}_2'+\boldsymbol{h}_3\boldsymbol{\Sigma}_{3}\boldsymbol{h}_3'+\alpha\boldsymbol{h}_3\boldsymbol{\Sigma}_{3}\boldsymbol{h}_4'+\alpha\boldsymbol{h}_4\boldsymbol{\Sigma}_{3}\boldsymbol{h}_3'+\boldsymbol{h}_4\boldsymbol{\Sigma}_{4}\boldsymbol{h}_4'$.
\end{proof}

\begin{proof}[Proof of Corollary \ref{cor:p1}]
When $p=1$,
\begin{equation*}
    g_1=\frac{\alpha-1}{\alpha}\frac{\sigma_{Vu}}{\sigma_{VV}^4}, 
\end{equation*}
\begin{equation*}
    \boldsymbol{g}_2=\frac{\alpha-1}{\alpha}(2\frac{\sigma_{Vu}}{\sigma_{VV}^4},-\frac{1}  \nonumber{\sigma_{VV}^2}) 
\end{equation*}
and
\begin{equation*}
    \boldsymbol{g}_3=(\frac{\sigma_{Vu}}{\sigma_{VV}^4},-\frac{1}{\sigma_{VV}^2}). 
\end{equation*}
By Isserlis' theorem, we verify that 
\begin{equation*}
   \mathrm{Var}(v_1^2)=2\sigma_{VV}^4, \quad \mathrm{E}(v_1^3u_1)=3\sigma_{VV}^2\sigma_{Vu},  
\end{equation*}
which yields that
\begin{equation*}
    \boldsymbol{\Sigma}_{3}= \begin{bmatrix}
      2\sigma_{VV}^4 &2\sigma_{VV}^2\sigma_{Vu}\\
      2\sigma_{VV}^2\sigma_{Vu} & \sigma_{VV}^2\sigma_{u}^2+\sigma_{Vu}^2
    \end{bmatrix}\quad  \text{and} \quad \boldsymbol{\Sigma}_{4}= \alpha   \boldsymbol{\Sigma}_{3}.
\end{equation*}
Finally, we have 
\begin{equation*}
  \sigma^2=\frac{1-\alpha}{\alpha}(\frac{\sigma_{u}^2}{\sigma_{vv}^2}-\frac{\sigma_{vu}^2}{\sigma_{vv}^4}). 
\end{equation*}
\end{proof}

\subsection{Proof of Theorem  \ref{thm:jsve}}
\begin{proof}[Proof of Theorem \ref{thm:jsve}]
It can be verified that 
\begin{equation}
    \hat{\bbeta}^{2SLS}-\hat{\bbeta}^{OLS}=(\Y'\PP_Z\Y)^{-1}\Y'\PP_Z\mathbf{M}_Y\y:=\Tilde{\Y}'\y,
\end{equation}
where $\Tilde{\Y}=\mathbf{M}_Y\PP_Z\Y(\Y'\PP_Z\Y)^{-1}$. Without loss of generality, we assume $p=1$. Define $\Tilde{\y}=\y-\mathrm{E}(\Tilde{Y}_1y_1)/\mathrm{Var}(\Tilde{Y}_1)\Tilde{\Y}$. Note that
\begin{equation}
    \hat{\beta}^{2SLS}-\hat{\beta}^{OLS}=\Tilde{\Y}'\Tilde{\y}+\frac{\mathrm{E}(\Tilde{Y}_1y_1)}{\mathrm{Var}(\Tilde{Y}_1)}\Tilde{\Y}'\Tilde{\Y}=\frac{1}{n}\sum_{i=1}^n\phi_i+R_n,
\end{equation}
where $\phi_i=n\Tilde{Y}_iy_i$ with zero mean and variance $\sigma_0^2$ and $R_n=\mathrm{E}(\Tilde{Y}_1y_1)/\mathrm{Var}(\Tilde{Y}_1)\Tilde{\Y}'\Tilde{\Y}$. We need to verify the condition (2.1) in Shao (1989) under $H_0$, namely, $\mathrm{E}R_n^2=o(n^{-1})$. If suffices to show that
\begin{equation*}
   \mathrm{E}\left(\frac{1}{\Y'\PP_Z\Y}-\frac{1}{\Y'\Y}\right)^2=o(n^{-1}).
\end{equation*}
It then holds by Assumption \ref{assum:bound} as $\Y'\PP_Z\Y\leq \Y'\Y$.
\end{proof}

\bibliographystyle{chicago}

\bibliography{Bibliography-MM-MC}

\end{document}